\newcommand{\eps}{\varepsilon}
\renewcommand{\epsilon}{\eps}
\newcommand{\etal}{\emph{et al.}\xspace}
\theoremstyle{plain}
\newenvironment{myquote}%
  {\list{}{\leftmargin=4mm\rightmargin=4mm}\item[]}%
  {\endlist}
\newenvironment{claiminproof}{\begin{myquote}\noindent\emph{Claim.}}{\end{myquote}}
\newenvironment{proofinproof}{\begin{myquote}\noindent\emph{Proof.}}{\hfill $\lhd$ \end{myquote}}
\newcommand{\factinproof}[2]{\begin{myquote}\noindent\emph{Fact~#1.} #2 \end{myquote}}
\newcommand{\A}{\ensuremath{\mathcal{A}}}
\newcommand{\B}{\ensuremath{\mathcal{B}}}
\newcommand{\C}{\ensuremath{\mathcal{C}}}
\newcommand{\D}{\ensuremath{\mathcal{D}}}
\newcommand{\F}{\ensuremath{\mathcal{F}}}
\newcommand{\G}{\ensuremath{\mathcal{G}}}
\newcommand{\ig}{\G^{\times}}               
\newcommand{\cH}{\ensuremath{\mathcal{H}}}
\newcommand{\cL}{\ensuremath{\mathcal{L}}}
\newcommand{\Q}{\ensuremath{\mathcal{Q}}}
\newcommand{\R}{\ensuremath{\mathcal{R}}}
\newcommand{\U}{\ensuremath{\mathcal{U}}}
\newcommand{\V}{\ensuremath{\mathcal{V}}}
\newcommand{\Integers}{{\mathbb{Z}}}
\newcommand{\Ints}{\Integers}
\newcommand{\REAL}{\ensuremath{\mathbb{R}}}
\newcommand{\Reals}{\REAL}
\renewcommand{\leq}{\leqslant}
\renewcommand{\geq}{\geqslant}
\newcommand{\bd}{\partial}
\DeclareMathOperator{\polylog}{polylog}
\DeclareMathOperator{\radius}{radius}
\DeclareMathOperator{\dist}{dist}
\DeclareMathOperator{\size}{size}
\DeclareMathOperator{\sibl}{sibl}
\DeclareMathOperator{\ply}{ply}
\newcommand{\tree}{\mathcal{T}}
\newcommand{\NE}{{\sc ne}}
\newcommand{\SE}{{\sc se}}
\newcommand{\SW}{{\sc sw}}
\newcommand{\NW}{{\sc nw}}
\newcommand{\vd}{\mathrm{Vor}}
\newcommand{\ue}{\mathrm{Env}}    
\newcommand{\dsource}{D_{\mathrm{src}}}
\newcommand{\sptree}{\tree_{\mathrm{sp}}}
\newcommand{\Dcand}{\D_{\mathrm{cand}}}
\newcommand{\lh}{\mbox{\sc lh}}    
\newcommand{\bit}{{\sc Bichromatic Intersection Testing}\xspace}
\def\DEF#1{\emph{#1}}    %
\title{An $O(n\log n)$ Algorithm for Single-Source Shortest Paths in Disk Graphs}
\author{Mark de Berg}{Department of Mathematics and Computer Science, TU Eindhoven, the Netherlands}{M.T.d.Berg@tue.nl}{0000-0001-5770-3784}{Supported by the Dutch Research Council (NWO) through    Gravitation-grant NETWORKS-024.002.003.}
\author{Sergio Cabello}{Faculty of Mathematics and Physics, University of Ljubljana, Ljubljana, Slovenia \and Institute~of~Mathematics, Physics and Mechanics, Ljubljana, Slovenia}{sergio.cabello@fmf.uni-lj.si}{0000-0002-3183-4126}{Funded in part by the Slovenian Research and Innovation Agency (P1-0297, N1-0218, N1-0285).
Funded in part by the European Union (ERC, KARST, project number 101071836). Views and opinions expressed are however those of the authors only and do not necessarily reflect those of the European Union or the European Research Council. Neither the European Union nor the granting authority can be held responsible for them.}
\authorrunning{M.~de Berg and S.~Cabello} 
\keywords{shortest path, geometric intersection graph, disk graph, fat triangles}
\begin{document}

\maketitle

\begin{abstract}
We prove that the single-source shortest-path problem on 
disk graphs can be solved in $O(n\log n)$ time, and that it
can be solved on intersection graphs of fat triangles in $O(n\log^2 n)$ time.
\end{abstract}

\section{Introduction}
\label{sec:intro}
Finding shortest paths in graphs is a classic topic
covered in all basic courses and textbooks on algorithms. 
In the most basic setting, which is the setting we consider here, the graph
is unweighted and the length of a path refers 
to the number of edges in the path. In the single-source shortest-path (SSSP) problem, 
the task is to compute shortest paths from a given vertex, 
the \emph{source}, to all other vertices of the graph. The solution 
is represented using a so-called \emph{shortest-path tree}.

The SSSP problem for a graph with $n$ vertices 
and $m$ edges can be solved in $O(n+m)$ time by breadth-first search, which
is optimal if the graph is given as a collection of vertices and edges. 
In this paper we will consider the SSSP problem for implicitly defined graphs.
In particular, we consider (planar) \DEF{intersection graphs}: graphs whose node set 
corresponds to a set~$\D$ of $n$~objects in the plane and that contain an edge~$(D,D')$ 
between two distinct objects $D,D'\in\D$ iff $D$ and $D'$ intersect. 
If the objects in $\D$ are disks then the intersection graph, which we denote by~$\ig[\D]$,
is called a \emph{disk graph}---see Figure~\ref{fig:intersection-graph}---and if all disks have
unit radius then it is called a \emph{unit-disk graph}. 
Disk graphs are arguably the most popular and widely studied intersection graphs. 
One can solve the SSSP problem on intersection graphs
by first constructing~$\ig[\D]$ explicitly and then running breadth-first search.
In the worst case, however, this requires $\Omega(n^2)$ time. This raises
the question: is it possible to solve the SSSP problem on intersection
graphs in subquadratic time in the worst case?
\begin{figure}[b]
\begin{center}
\includegraphics{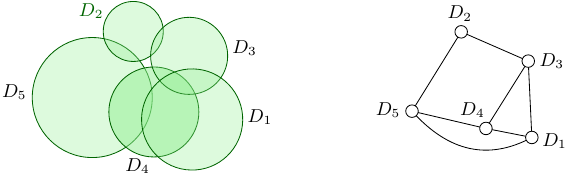}
\end{center}
\caption{A set~$\D$ of five disks (left) and their intersection graph (right).}
\label{fig:intersection-graph}
\end{figure}

Given the basic nature of the SSSP problem and the prominence of (unit-)disk graphs,
it is not surprising that this question has already been considered for such graphs. 
For unit disks, Roditty and Segal~\cite{RodittyS11} noticed that the dynamic data structure 
of Chan~\cite{Chan10} for nearest-neighbor queries can be used
to solve the problem in $O(n\polylog n)$ expected time.
Cabello and Jej{\v c}i{\v c}~\cite{CabelloJ15} gave an $O(n \log n)$ algorithm
and showed that this is asymptotically optimal in the algebraic decision-tree model. 
They remarked that Efrat noted that the semi-dynamic 
data structure of Efrat, Itai and Katz~\cite{EfratIK01} 
also gives an algorithm with running time $O(n \log n)$.
Finally, Chan and Skrepetos~\cite{ChanS16} provided a simpler $O(n \log n)$ algorithm.
The above algorithms either use Delaunay triangulations~\cite{CabelloJ15} 
or fixed-resolution grids~\cite{ChanS16,EfratIK01}; these approaches are not
applicable for disks of arbitrarily different sizes.

For arbitrary disks graphs, Kaplan~\etal~\cite{KaplanMRSS20} and Liu~\cite{doi:10.1137/20M1388371}
presented algorithms running in~$O(n\log^4 n)$ time.  This was recently improved
by Klost~\cite{Klost23}, who presented a general framework for solving SSSP 
problems in intersection graphs. She used the framework
to obtain an~$O(n\log^2 n)$ algorithm for the SSSP problem on disk graphs, and 
to obtain an $O(n\log n)$ algorithm for intersection graphs of axis-aligned squares.

\subparagraph{Our results.}
Our main result is an $O(n\log n)$ algorithm for the SSSP problem
in disk graphs. We obtain this result using a version of the framework proposed by Klost~\cite{Klost23};
our specialized framework and its relation to Klost's framework are
discussed in detail in Section~\ref{sec:framework}. A core ingredient in
the framework is a \emph{clique-based contraction} of the intersection graph~$\ig[\D]$:
a graph obtained from~$\ig[\D]$ by contracting certain cliques to single nodes.
To compute such a clique-based contraction
in $O(n\log n)$ time, we combine shifted quadtrees~\cite{C-PTAS-fat}
and skip quadtrees~\cite{EppsteinGS08}---powerful 
and beautiful concepts that we believe deserve more attention---with standard
techniques for stabbing fat objects.

Our algorithm for computing a clique-based contraction not only applies
to disk graphs but also to intersection graphs of \emph{fat triangles},
that is, triangles all of whose internal angles are lower bounded by
a fixed constant~$\alpha>0$~\cite{AronovBES14}. By combining this
with a novel intersecting-detection data structure for fat triangles,
we obtain an $O(n\log^2 n)$ algorithm for the SSSP problem 
on intersection graphs of fat triangles.
We are not aware of previous work on the SSSP
problem for fat triangles. 
The all-pairs shortest-path (APSP) problem, however, has been considered for
fat triangles, by Chan and Skrepetos~\cite{ChanS19}.
The algorithm they present runs in $O(n^2 \log^4 n)$ time, under the condition
that the fat triangles have roughly the same size.
(They also present an $O(n^2\log n)$ algorithm for the APSP problem for arbitrary disk graphs, 
and an $O(n^2)$ algorithm for unit-disk graphs~\cite{ChanS16}.)
Running our new SSSP algorithm $n$ times, 
once for each input triangle as the source, we obtain an APSP algorithm for 
arbitrarily-sized fat triangles
that runs in $O(n^2\log^2 n)$ time. This is faster and more general 
than the algorithm of Chan and Skrepetos.
(The improvement for the APSP problem is mainly
because of our new intersection-detection data structure
for fat triangles; using this data structures in
the existing algorithm~\cite{ChanS19} would also give an $O(n^2\log^2 n)$ algorithm.)

\subparagraph{On the model of computation.}
We use the real-RAM model---this is the standard model in
computational geometry, which allows us for example
to check in $O(1)$ time if two disks intersect---extended with
an additional operation that allows us to compute a compressed
quadtree on a set of $n$ points in~$O(n\log n)$ time.
This extension is common when working with quadtrees;
see the book by Har-Peled~\cite{HarPeledBook}.

\section{The framework}
\label{sec:framework}
Our framework for computing a shortest-path tree on an intersection 
graph~$\ig[\D]$ is an instantiation of the framework of 
Klost~\cite{Klost23}. We describe it
in detail to keep the paper self-contained and make the required subroutines explicit; 
at the end of this section we
discuss the correspondence of our approach to that of Klost.
For convenience, we will from now on not distinguish between the objects in the set $\D$
and the corresponding nodes in~$\ig[\D]$.

Let $\dsource\in\D$ be the given source node, and let $\sptree$ be the shortest-path tree 
of $\ig[\D]$ that we want to compute for the given source. 
For an object $D\in \D$, let $\dist[D]$ be the distance from
$\dsource$ to $D$ in $\ig[\D]$, and let $L_\ell := \{D\in \D : \dist[\D]=\ell\}$. 
Thus, $L_\ell$ is the set of nodes at level~$\ell$ in~$\sptree$. 
We follow the natural approach of computing $\sptree$ level by level. 
To be able to go from one level to the next, we need a subroutine for
the following problem.
\begin{quotation}
\noindent \bit. Given a set~$B$ of $n_B$ blue objects and a set~$R$
of $n_R$ red objects, report the blue objects that intersect at least one red object,
and for each reported blue object, report a witness (a red object intersecting it).
\end{quotation}
A subroutine \emph{BIT-Subroutine} that solves \bit allows us to compute $L_{\ell+1}$ by setting
$R := L_{\ell}$ and $B := \Dcand$, where $\Dcand$ is a set of 
\emph{candidate objects} that should contain all objects from~$L_{\ell+1}$ and
no objects from $L_{\leq \ell}$, where
$L_{\leq \ell} := \bigcup_{i=0}^{\ell} L_i$.
We could simply set $\Dcand := \D \setminus L_{\leq \ell}$,
but this will not be efficient; we must keep the total size of the
candidates sets over all levels~$\ell$ under control. We do this using
a so-called clique-based contraction, as defined next.
\smallskip

Let $\G=(V,E_{\G})$ be an (undirected) graph. 
We say that a graph $\cH=(\C,E_{\cH})$ is a \DEF{clique-based contraction} of $\G$ if
\begin{itemize}
\item each node $C\in\C$ corresponds to a clique in $\G$ and the cliques in $\C$ form
      a partition\footnote{In our application, it would also be sufficient to require 
      that each node in $V$ appears in $O(1)$ cliques, but we do not need this extra flexibility.} of the nodes in $V$;
\item there is an edge $(C,C')\in E_{\cH}$ iff there are nodes $v\in C$ and $v'\in C'$ 
      such that $(v,v')\in E_{\G}$. 
\end{itemize}
\smallskip
Note that for each edge $(v,v')\in E_{\G}$ there exists a clique $C\in \C$ that
contains both $v$ and $v'$, or there are cliques $C\ni v$ and $C'\ni v'$ such that
$(C,C')\in E_{\cH}$.
\medskip

A clique-based contraction $\cH$ of the intersection graph~$\ig[\D]$ helps us to find good candidate sets.
To see this, let $\C_{\ell}\subset \C$ denote the set of cliques\footnote{With a slight abuse of terminology,
we not only use the term \emph{clique} to refer a complete graph, but also to refer to
a set of geometric objects whose intersection graph is a clique.} that contain
at least one object from $L_{\ell}$. Then any object $D\in L_{\ell+1}$ 
must be part of a clique $C$ that is either in $\C_{\ell}$ itself or that
is a neighbor in $\cH$ of a clique in~$\C_{\ell}$. In other words, if $D\in L_{\ell+1}$  then $D\in C$ for some clique $C\in N_{\cH}[\C_{\ell}]$,
where $N_{\cH}[\C_{\ell}]$ denotes the closed neighborhood of $\C_\ell$ in~$\cH$.
Moreover, the objects in a clique are in at most two levels in the 
shortest-path three~$\sptree$, which helps to control the total size of the candidate sets.
The following pseudocode describes our framework in detail.
It does not explicitly construct the shortest-path tree itself
(it only constructs the levels), but this is easy to do using the witnesses reported
by \emph{BIT-Subroutine}.
\begin{algorithm}[H] 
\caption{\emph{SSSP-for-Geometric-Intersection-Graphs}($\D,\dsource$)}
\begin{algorithmic}[1]
\State Construct a clique-based contraction $\cH=(\C,E_\cH)$ of $\ig[\D]$ \label{step:contract}
\State $L_0 \gets \{ \dsource \}$; \ \ $\C_0 \gets \{ \mbox{the clique $C\in \C$ that contains~$\dsource$} \}$ \hfill $\rhd$ $\dsource$ is the source 
\State Label $\dsource$ as \emph{ready}; \ \ $\ell\gets 0$; \ \  $\mbox{\emph{done}}\gets\mbox{{\sc false}}$. \label{step:init}
\While {not \emph{done}} 
    \State $\Dcand \gets \{ D \in  \D : \mbox{$D\in C$ for a clique $C\in N_{\cH}[\C_\ell]$ and $D$ is not \emph{ready}} \}$   \label{step:cand}
    \State $L_{\ell+1} \gets \mbox{\emph{BIT-Subroutine}}(\Dcand,L_{\ell})$ \label{step:bit}
    \If{$L_{\ell+1}=\emptyset$} \label{step:if}
        \State  $L_{\infty} \gets \D \setminus L_{\leq \ell}$; \ \ 
                $\mbox{\emph{done}}\gets\mbox{{\sc true}}$ \hfill $\rhd$ nodes in $L_{\infty}$ are unreachable from $\dsource$ 
    \Else 
    \State $\C_{\ell+1} \gets \{ \mbox{$C\in \C$ : $C$ contains an object $D\in L_{\ell+1}$} \} $
    \State Label all objects in $L_{\ell+1}$ as \emph{ready}; \ \  $\ell \gets \ell+1$ \label{step:ready}
    \EndIf
\EndWhile
\end{algorithmic}
\end{algorithm}
\noindent We obtain the following theorem.
\begin{theorem} \label{thm:framework}
Let $\D$ be a set of $n$ constant-complexity objects in the plane. Suppose that
\begin{enumerate}[(i)]
\item we can compute a clique-based contraction for $\ig[\D]$ in $T_{\mathrm{ccg}}(n)$ time,
\item for any two subsets $B,R\subset \D$ we can solve \bit in $T_{\mathrm{bit}}(n_B,n_R)$ time, where $n_B := |B|$ and $n_R := |R|$.
\end{enumerate}
Then we can compute a shortest-path tree in $\ig[\D]$ for a given source node $\dsource\in \D$ in 
$O(T_{\mathrm{ccg}}(n) + T_{\mathrm{bit}}(n,n))$ time.
\end{theorem}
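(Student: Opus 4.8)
The plan is to analyze Algorithm~1 (\emph{SSSP-for-Geometric-Intersection-Graphs}) directly, proving correctness and then the running time; assumptions (i) and (ii) are used only as black boxes for lines~\ref{step:contract} and~\ref{step:bit}.

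\textbf{Correctness.} I would prove by induction on $\ell$ the invariant that, at the top of each iteration of the while loop, the sets computed so far satisfy $L_i=\{D\in\D:\dist[D]=i\}$ for every $i\le\ell$, and that an object is labelled \emph{ready} precisely when it lies in $L_{\le\ell}$; the base case is the initialization. For the inductive step the key facts to establish are: (a) $\Dcand$ computed in line~\ref{step:cand} contains no object of $L_{\le\ell}$ (those are \emph{ready}) and contains all of $\{D:\dist[D]=\ell+1\}$ --- indeed such a $D$ has a neighbour $D'\in L_\ell$ in $\ig[\D]$, and by the remark following the definition of clique-based contraction the clique of $D$ either lies in $\C_\ell$ or is a $\cH$-neighbour of a clique in $\C_\ell$, hence lies in $N_{\cH}[\C_\ell]$; and (b) \emph{BIT-Subroutine}$(\Dcand,L_\ell)$ reports a candidate $D$ iff $D$ intersects some object of $L_\ell$, which (using that every candidate satisfies $\dist[D]\ge\ell+1$ by the induction hypothesis, together with a shortest-path argument for the reverse direction) happens iff $\dist[D]=\ell+1$. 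Thus $L_{\ell+1}$ is computed correctly, and the reported witnesses give each new object its parent in $\sptree$. Finally I would handle the termination branch (line~\ref{step:if}): if $L_{\ell+1}=\emptyset$, take an object of $\D\setminus L_{\le\ell}$ of minimum finite distance $d^\ast\ge\ell+1$; it has a neighbour at distance $d^\ast-1$, so minimality forces $d^\ast-1\le\ell$, i.e.\ $d^\ast=\ell+1$, and then fact (a)--(b) would place it into $L_{\ell+1}$, a contradiction; hence $\D\setminus L_{\le\ell}$ is exactly the set of unreachable objects and $L_{\infty}$ is set correctly.

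\textbf{Running time.} Line~\ref{step:contract} costs $T_{\mathrm{ccg}}(n)$, and I would assume harmlessly that $\cH$ is returned as adjacency lists, so $|\C|+|E_{\cH}|=O(T_{\mathrm{ccg}}(n))$ and $T_{\mathrm{ccg}}(n)=\Omega(n)$. Writing $\Dcand^{(\ell)}$ for the candidate set of the iteration handling level $\ell$, the loop costs $\sum_\ell T_{\mathrm{bit}}(|\Dcand^{(\ell)}|,|L_\ell|)$ plus bookkeeping. Since the $L_\ell$ partition the reachable objects, $\sum_\ell |L_\ell|\le n$. The crux is $\sum_\ell |\Dcand^{(\ell)}|=O(n)$: I claim each object $D$ lies in $\Dcand^{(\ell)}$ for only $O(1)$ values of $\ell$. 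Because the clique $C_D$ containing $D$ is a clique of $\ig[\D]$, the distances of its members differ pairwise by at most $1$, so they all lie in $\{k,k+1\}$ for a single $k=k(C_D)$ (or are all $\infty$, in which case $C_D$ is never in any $N_{\cH}[\C_\ell]$, consistent with $D$ ending in $L_{\infty}$); a short analysis of a single edge of $\cH$ incident to $C_D$ then shows that every $\cH$-neighbour of $C_D$ has its $k$-value within $2$ of $k(C_D)$, so $C_D\in N_{\cH}[\C_\ell]$ forces $\ell$ into an interval of length $O(1)$ around $k(C_D)$. Combined with the requirement that a candidate is not \emph{ready} (so $\ell<\dist[D]\le k(C_D)+1$), this leaves $O(1)$ values of $\ell$. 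The same ``neighbouring cliques have nearby levels'' observation shows the bookkeeping --- forming $N_{\cH}[\C_\ell]$ by scanning the adjacency lists of the cliques in $\C_\ell$ and then their members --- totals $O(|\C|+|E_{\cH}|+n)=O(T_{\mathrm{ccg}}(n))$ over all iterations. Finally, under the mild growth condition met in every instantiation (namely $T_{\mathrm{bit}}(a,b)=\Omega(a+b)$ and $T_{\mathrm{bit}}(n,n)/n$ non-decreasing), $\sum_\ell T_{\mathrm{bit}}(|\Dcand^{(\ell)}|,|L_\ell|)= O\bigl((\sum_\ell(|\Dcand^{(\ell)}|+|L_\ell|))\cdot T_{\mathrm{bit}}(n,n)/n\bigr)=O(T_{\mathrm{bit}}(n,n))$, and the total is $O(T_{\mathrm{ccg}}(n)+T_{\mathrm{bit}}(n,n))$.

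\textbf{Main obstacle.} The step I expect to require care rather than being routine is the bound $\sum_\ell|\Dcand^{(\ell)}|=O(n)$: it is exactly what stops a naive ``add every unready object in a neighbouring clique'' choice from costing $\Theta(n)$ per level, and it hinges entirely on the two-consecutive-levels-per-clique property of clique-based contractions together with the one-hop level-spread estimate in $\cH$. Everything else --- the induction, the termination argument, and the aggregation of the $T_{\mathrm{bit}}$ terms --- is straightforward once that estimate is in hand.
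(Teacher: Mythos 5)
Your proof is correct and follows essentially the same route as the paper's: correctness by induction on the levels, and the running time bounded by the key observation that each object lies in $\Dcand$ for only $O(1)$ values of $\ell$, which follows from the fact that a clique spans at most two levels of $\sptree$ and $\cH$-neighbouring cliques have level-offsets that differ by $O(1)$. You are also right to flag that $\sum_\ell T_{\mathrm{bit}}(|\Dcand^{(\ell)}|,|L_\ell|)=O(T_{\mathrm{bit}}(n,n))$ needs a mild superadditivity/growth assumption on $T_{\mathrm{bit}}$; the paper makes the same point (``at least linear and at most quadratic'') with slightly less precision than your formulation.
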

\begin{proof}
It is straightforward to prove by induction on~$\ell$ that our algorithm correctly 
computes the levels~$L_\ell$ of the shortest-path tree. 
To prove the bound on the running time, observe that Steps~\ref{step:contract}--\ref{step:init}
take $O(T_{\mathrm{ccg}}(n))$ time. It remains to bound the runtime of the while-loop.

For a clique $C\in \C$, let $\dist[C] := \min_{D\in C} \dist[D]$. 
Note that for objects $D,D'$ in cliques $C,C'\in \C$ that are neighbors in~$\cH$, 
we have $|\dist[D]-\dist[D']\,|\leq 3$.
Hence, $D$ can only be in the candidate set $\Dcand$ in iterations of the while-loop
where $\dist[D]-3\leq \ell\leq \dist[C]-1$. Thus, if we denote the size 
of $\Dcand$ in iteration~$\ell$ by $n_{\ell}$, then the total time needed
for Step~\ref{step:bit} over all iterations 
is $\sum_{\ell} T_{\mathrm{bit}}(n_\ell,|L_\ell|)$, 
where $\sum_{\ell} n_\ell \leq 3n$ and $\sum_{\ell} |L_\ell| \leq n$.
Since $T(n_B,n_R)$ is at least linear in $n_B$ and $n_R$
(and at most quadratic), we have 
$\sum_{\ell} T_{\mathrm{bit}}(n_\ell,|L_\ell|)=O(T_{\mathrm{bit}}(n,n))$.
This also bounds the total runtime of Steps~\ref{step:if}--\ref{step:ready} 
over all iterations.

To bound the total time for Step~\ref{step:cand}, note that a clique $C$ is only
considered in iterations where $\dist[C]-2 \leq \ell \leq \dist[C]+2$. 
This implies that the total time to find the relevant cliques in Step~\ref{step:cand}
is $O(|E_\cH|)$. The total time to inspect those cliques in order to find the candidate
sets, is $O(\sum_{C\in \C} |C|)$. Thus, the total time for Step~\ref{step:cand}
over all iterations is $O(|E_\cH| + \sum_{C\in \C} |C|)$, 
which can be bounded by $O(T_{\mathrm{ccg}}(n))$.
\end{proof}
\noindent \emph{Remark: relation to the framework of Klost.}
The framework presented above is an instantiation of the framework of 
Klost~\cite{Klost23}: she also computes the shortest-path tree
level by level as described above---this natural approach was also taken by earlier papers
on the SSSP problem on (unit-)disk 
graphs~\cite{CabelloJ15,ChanS16,ChanS19,Klost23}---and she also keeps the size
of the candidate sets under control using an auxiliary graph. Klost uses
a so-called \emph{shortcut graph} $\G_{\mathrm sc}$ as auxiliary graph. 
The nodes in $\G_{\mathrm sc}$ correspond to the objects in~$\D$,
and the edge set $E_{\mathrm sc}$
is a superset of the edge set of $\ig[\D]$ such that for any edge
$(D,D')\in E_{\mathrm sc}$ we have that the distance between $D$ and $D'$ in 
$\ig[\D]$ is~$O(1)$. As $E_{\mathrm sc}$ is a superset of the edge set of~$\ig[\D]$,
the size of $E_{\mathrm sc}$ can be quadratic, so $\G_{\mathrm sc}$ needs
to be constructed implicitly. Our clique-based contraction~$\cH=(\C,E_cH)$ provides
such an implicit shortcut graph, by defining 
$E_{\mathrm sc} := \{\,(D,D') :$ $D$ and $D'$ are part of the same clique in $\C$ 
or of neighboring cliques$\,\}$.

\section{The algorithm for disks}
\label{sec:disks}
In this section we implement the framework presented above for the case where the input $\D$ 
is a set of $n$ disks in the plane. We first explain
how to create the cliques in our clique-based contraction~$\cH$ of $\ig[\D]$, and how
to compute the edge set of $\cH$. Using known results on \bit
for disks we then obtain our final result.

\subparagraph{Finding the cliques.}
Our algorithm to create the set of cliques is related to a recent 
construction by Chan and Huang~\cite[Section~3]{ChanH23} 
of 3-hop spanners for disk graphs; we comment more on the similarities and differences later. 
The construction is based on so-called shifted quadtrees, introduced by Chan~\cite{C-PTAS-fat},
which we describe next. 

We start by defining hierarchical grids, a concept closely related to quadtrees;
the connection will be made clear below.
Let $o=(o_x,o_y)$ be any point in the plane.
A \DEF{cell} in the hierarchical grid centered at $o$ is any square of the form
\[
\sigma(o,\ell,i,j):=~ 
	\big[ o_x+i \cdot 2^{\ell}, o_x+ (i+1) \cdot 2^{\ell} \big) \times 
	\big[ o_y+j \cdot 2^{\ell}, o_y+(j+1) \cdot 2^{\ell} \big)
\]
for integers $i,j,\ell\in \Ints$. Note that $\sigma(o,\ell,i,j)$
is a translation of the square $[0,2^{\ell})\times [0,2^{\ell})$ by
the vector $o+(i \cdot 2^{\ell},j \cdot 2^{\ell})$, and that
the set $\Sigma(o,\ell): = \{\sigma(o,\ell,i,j): i,j \in \Ints\}$ 
forms a regular grid. Moreover, the grid~$\Sigma(o,\ell-1)$ is a refinement
of the grid~$\Sigma(o,\ell)$, obtained by partitioning each cell
of $\Sigma(o,\ell)$ into four quadrants.
We define the \DEF{hierarchical grid centered at} $o$, denoted by $\Gamma(o)$,
to be the (infinite) collection~$\{ \Sigma(o,\ell) :\ell\in\Ints\}$ of nested grids.

Define the \DEF{size} of any object~$D$, denoted by $\size(D)$, to
be the side length of a smallest enclosing square of~$D$.
Thus $\size(\sigma)=2^{\ell}$ for any cell $\sigma\in\Sigma(o,\ell)$.
We say that an object~$D$ is \DEF{$c$-aligned} with a hierarchical
grid~$\Gamma$, for a given constant~$c>0$, if there exists a cell~$\sigma$ 
in~$\Gamma$ such that $D\subset \sigma$ and $\size(\sigma)\leq c\cdot \size(D)$. 
Note that whether or not $D$ is $c$-aligned with $\Gamma$ only depends 
on the choice of the center~$o$ of the hierarchical grid. The definition of 
being $c$-aligned trivially implies the following. 
\begin{observation} \label{obs:c-aligned}
Let $D$ be an object that is $c$-aligned with a hierarchical grid $\Gamma$, 
and let $\sigma$ be a cell of $\Gamma$ such that $D$ intersects the 
boundary~$\bd \sigma$ of~$\sigma$.
Then $\size(D) \geq (1/c)\cdot \size(\sigma)$.
\end{observation}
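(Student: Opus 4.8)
The plan is to read off the bound directly from the definition of $c$-alignment, using only the laminar structure of the cells of a hierarchical grid. Since $D$ is $c$-aligned with $\Gamma$, fix a cell $\tau\in\Gamma$ with $D\subseteq\tau$ and $\size(\tau)\leq c\cdot\size(D)$. I will show that $\sigma\subseteq\tau$; this immediately gives $\size(\sigma)\leq\size(\tau)\leq c\cdot\size(D)$, i.e.\ $\size(D)\geq(1/c)\cdot\size(\sigma)$, as claimed.

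To prove $\sigma\subseteq\tau$, recall that the grids $\Sigma(o,\ell)$ are successive refinements of one another, so every cell of $\Gamma$ is contained in exactly one cell of each coarser level; consequently any two cells of $\Gamma$ are either nested or disjoint. Now $D$ meets $\bd\sigma$, so in particular $D\cap\sigma\neq\emptyset$, and since $D\subseteq\tau$ this forces $\tau\cap\sigma\neq\emptyset$; hence $\tau$ and $\sigma$ are nested. If we had $\tau\subseteq\sigma$, then $D\subseteq\tau\subseteq\sigma$ and $D$ would be contained in $\sigma$, contradicting that $D$ reaches $\bd\sigma$ from outside $\sigma$. So the remaining alternative, $\sigma\subseteq\tau$, must hold, which completes the argument.

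The whole proof is a one-liner once the ``nested or disjoint'' property of hierarchical-grid cells is available, and that property is immediate from the definition of $\Gamma(o)$ as a chain of successively refined regular grids; this is why the text announces the statement as a trivial consequence of $c$-alignment. The main (and essentially only) point requiring care --- and the place where the half-open convention for cells matters --- is the precise reading of ``$D$ intersects $\bd\sigma$'': I use it in the form ``$D$ meets $\sigma$ but is not contained in $\sigma$'', i.e.\ $\bd\sigma$ genuinely splits $D$, which is the situation relevant to the algorithm and which rules out the degenerate case of a tiny $D$ sitting inside $\sigma$ and merely grazing its boundary. With this reading no case analysis beyond the above is needed.
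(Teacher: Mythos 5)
Your proof is correct and is presumably the argument the authors have in mind; the paper supplies no proof, deeming the observation a ``trivial'' consequence of the definition of $c$-alignment, and the nested-or-disjoint dichotomy for cells of a hierarchical grid plus the existence of the aligning cell $\tau\supseteq D$ is exactly the content being invoked. You are also right to flag the reading of ``$D$ intersects $\bd\sigma$'': taken fully literally, the bound would fail for a tiny $D$ tangent to $\bd\sigma$ from inside (or grazing it from outside without meeting the half-open $\sigma$), so the interpretation you adopt---$D$ meets $\sigma$ but is not contained in it, which is precisely the situation the quadtree algorithm produces when it assigns a disk to a node---is the one under which the observation is actually true.
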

Let $\D$ be a finite set of objects.
In general, it is impossible to choose the center of a hierarchical grid $\Gamma$ 
such that each object~$D\in \D$ is $c$-aligned with~$\Gamma$. However, Chan~\cite{C-PTAS-fat}
has shown that, surprisingly, it \emph{is} possible to pick a small number
of different centers such that each object~$D\in\D$ is $c$-aligned with 
at least one of the resulting hierarchical grids, for a suitable constant~$c$. 
Even more surprisingly, we can select these centers independently of~$\D$.
The following lemma follows directly from Lemma~3.2 in Chan's paper.\footnote{Chan uses
the term \emph{shifted quadtree} in his lemma. To stress the fact
that the shifts are independent of any input set on which one might build a quadtree,
we prefer to use the term \emph{hierarchical grid}.}
\begin{lemma}[\cite{C-PTAS-fat}] \label{lem:shifted-quadtree}
There is a collection~$\Xi$ of three hierarchical grids, each centered at a different point, 
with the following property: for any object $D$ contained in the square $[0,1)\times [0,1)$
there is a hierarchical grid $\Gamma\in \Xi$ such that $D$ is 6-aligned with~$\Gamma$.
\end{lemma}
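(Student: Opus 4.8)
The plan is to write down three grids explicitly and verify the alignment property by a short counting argument; I remark at the end how one could instead simply quote Chan's Lemma~3.2. Concretely, take $\Xi:=\{\Gamma(o_0),\Gamma(o_1),\Gamma(o_2)\}$ with $o_k:=(\tfrac{4k}{3},\tfrac{4k}{3})$ for $k\in\{0,1,2\}$. The single structural fact that drives the proof is that, since $D\subseteq[0,1)\times[0,1)$ lies in a square of side $1$, we have $\size(D)\leq 1$; hence the only grid levels that can ever be used are those with $2^\ell\leq 6$, that is, $\ell\leq 2$.

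Given $D$, set $s:=\size(D)$ (we may assume $s>0$, as otherwise no cell has size at most $6\,\size(D)$) and let $\ell$ be the unique integer with $3s<2^\ell\leq 6s$, so $\ell\leq 2$. I would reduce to one dimension: $D$ fits in a single cell of $\Sigma(o_k,\ell)$ as soon as no vertical line of $\Sigma(o_k,\ell)$ meets the $x$-projection of $D$ and no horizontal line meets its $y$-projection. The $x$-projection of $D$ is an interval of length at most $\size(D)=s$, so the set of real values $t$ such that the one-dimensional grid $\{t+i2^\ell:i\in\Ints\}$ meets it is, modulo $2^\ell$, an arc of length at most $s$; likewise for the $y$-projection. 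As the two coordinates of each $o_k$ are equal, $o_k$ can fail to provide such a cell only if $\tfrac{4k}{3}\bmod 2^\ell$ lies in one of these two arcs. It therefore suffices to show that each arc of length at most $s$ contains at most one of the three residues $\tfrac{4k}{3}\bmod 2^\ell$: then at most two of the three shifts fail, a good shift $o_k$ remains, and $D$ lies in a cell $\sigma\in\Sigma(o_k,\ell)$ with $\size(\sigma)=2^\ell\leq 6s=6\,\size(D)$.

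For the counting step, observe that $\ell\leq 2$ gives $2-\ell\geq 0$, so $4/2^\ell=2^{2-\ell}$ is a positive integer and, being a power of $2$, is coprime to $3$. Hence $k\mapsto k\cdot 2^{2-\ell}\bmod 3$ is a permutation of $\{0,1,2\}$, and therefore $\{\tfrac{4k}{3}\bmod 2^\ell:k=0,1,2\}=\{0,\tfrac{2^\ell}{3},\tfrac{2\cdot2^\ell}{3}\}$ — three residues whose pairwise distances in the circle of circumference $2^\ell$ all equal $2^\ell/3$. Since $2^\ell>3s$ yields $2^\ell/3>s$, no arc of length at most $s$ can contain two of them, which is exactly what was required; a union bound over the $x$- and $y$-directions completes the argument.

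The one genuinely delicate point is the scale coupling exploited above: no triple of shift points has residues that are well spread modulo $2^\ell$ for \emph{every} $\ell\in\Ints$ (equal spacing at one level forces clustering at the next), so the argument must use that the hypothesis $D\subseteq[0,1)\times[0,1)$ caps the relevant levels at $\ell\leq 2$, together with the coprimality of $2$ and $3$. This is precisely the content of Chan's Lemma~3.2~\cite{C-PTAS-fat}; alternatively one may quote it directly, after matching Chan's quadtree normalization to the square $[0,1)\times[0,1)$ and noting that a quadtree cell containing $D$ of size $O(\size(D))$ is a cell of the associated hierarchical grid.
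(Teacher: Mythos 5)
The paper itself does not prove this lemma; it simply cites Chan's Lemma~3.2 from~\cite{C-PTAS-fat}. Your proposal supplies an explicit, self-contained derivation, and it is correct. The choice of shifts $o_k=(4k/3,4k/3)$ is the right one: you need the residues $\{4k/3 \bmod 2^\ell : k=0,1,2\}$ to be equally spaced on the circle of circumference $2^\ell$ for \emph{every} relevant level, and your observation that $D\subseteq[0,1)^2$ forces $\ell\leq 2$, so that $4/2^\ell=2^{2-\ell}$ is a positive power of $2$ and hence invertible mod $3$, is exactly what makes this work. (A naive shift like $(k/3,k/3)$ would fail at level $\ell=2$: there the residues $0,1/3,2/3$ cluster in an arc of length $2/3<1$, so a single projection could kill all three shifts.) The reduction to one dimension, the identification of the bad offsets as an arc of length at most $\size(D)$ modulo $2^\ell$, and the pigeonhole over the two coordinate directions are all sound; the choice of $\ell$ with $3\size(D)<2^\ell\leq 6\size(D)$ is well-defined because that interval has ratio $2$, and it gives both $\size(\sigma)=2^\ell\leq 6\size(D)$ and the strict separation $2^\ell/3>\size(D)$ needed for the counting step. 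The only caveat, which you flag, is the degenerate case $\size(D)=0$, where no grid can $c$-align a point; this is implicit in Chan's statement as well and is irrelevant for disks and fat triangles of positive size. In short: the paper's route is to cite Chan, yours is to reconstruct the standard shifted-grid argument with explicit shift points; the latter costs a paragraph but makes the constant $6$ and the number of grids ($3$) transparent, which is a nice bonus for a reader who does not want to chase the reference.
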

Let~$P$ be a set of points, let~$\Gamma$ be a hierarchical grid centered
at a given center~$o$, and let~$\sigma_0$ be a cell of~$\Gamma$ that contains~$P$. 
The we can construct a \DEF{quadtree} on $P$---see Figure~\ref{fig:quadtree}(i)---whose 
root node corresponds to~$\sigma_0$.
Any node in this quadtree corresponds to a cell of $\Gamma$.
Let $\Q=\Q(\sigma_0,P)$ be the corresponding \DEF{compressed quadtree}~\cite{HarPeledBook}.
Each internal node~$\nu$ of $\Q$ corresponds to a 
cell~$\sigma_{\nu}$ of $\Gamma$ and each leaf node $\mu$ corresponds to
a region~$\sigma_{\mu}$ that is either a cell of $\Gamma$ or a donut cell;
see Figure~\ref{fig:quadtree}(ii).
(A \DEF{donut cell} is the difference $\sigma_1\setminus\sigma_2$ of
two cells~$\sigma_1$ and~$\sigma_2$ of~$\Gamma$ with $\sigma_2\subset\sigma_1$.)
The \DEF{subdivision} defined by the compressed quadtree~$\Q$ is the set of regions
corresponding to the leaves of~$\Q$; this subdivision is a decomposition of~$\sigma_0$.
\begin{figure}
\begin{center}
\includegraphics[width=\textwidth]{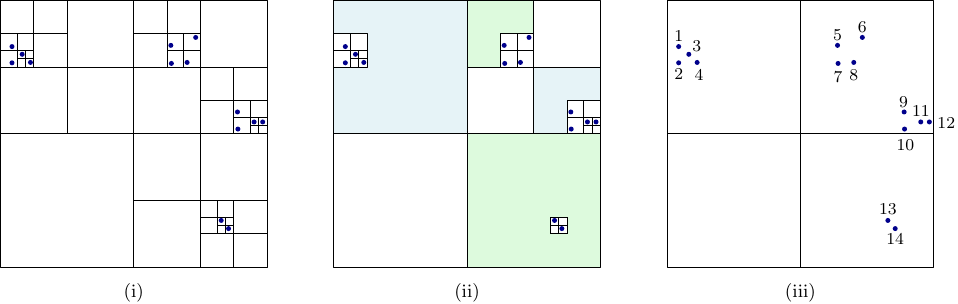}
\end{center}
\caption{(i)~A quadtree subdivision for a set $P$ of points.
(ii)~The compressed-quadtree subdivision for~$P$, 
with its donut cells marked. 
(iii)~The ordering of the points to construct a skip quadtree.} 
\label{fig:quadtree}
\end{figure}
Our algorithm to efficiently compute the cliques in the clique-based contraction combines 
the hierarchical grids of Lemma~\ref{lem:shifted-quadtree} with 
skip quadtrees~\cite{EppsteinGS08}, as explained next. 
\medskip

Skip quadtrees, defined by Eppstein, Goodrich, and Sun~\cite{EppsteinGS08}
are defined as follows. Let~$P$ be a set of $n$ points inside a given starting  
cell~$\sigma_0$ of a hierarchical grid~$\Gamma$. We start by sorting the 
points in~$P$ in Z-order, as follows.
First, order the points according to the quadrants from~$\sigma_0$ they fall in:
put the points in the \NW-quadrant first, then those in the \NE-quadrant,
then those in the \SW-quadrant, and finally those in the \SE-quadrant.
Next, recursively sort the points in each quadrant. 
Let $p_1,\ldots,p_n$ be the resulting sorted list
of points; see Figure~\ref{fig:quadtree}(iii).
We now create a sequence $P=P_0 \supset P_1 \supset \cdots \supset P_{t}$,
where $P_i$ is obtained from $P_{i-1}$ by deleting every other element (starting with the first element)
and $|P_t|=1$.
From the sequence $P_0 \supset P_1 \supset \cdots \supset P_{t}$ we construct
a sequence of compressed-quadtree \emph{subdivisions} $\Q_0,\Q_1,\ldots,\Q_{t}$, each based
on the same starting cell~$\sigma_0$. 
The sequence $\Q_0,\Q_1,\ldots,\Q_{t}$
has the following properties, as illustrated in Figure~\ref{fig:quadtree2}:
\begin{itemize}
\item $\Q_t$ is equal to the square~$\sigma_0$.
\item $\Q_i$ is a refinement of $\Q_{i+1}$ for all $0\leq i<t$, that is, 
      each region in $\Q_i$ is contained in a unique region in~$\Q_{i+1}$.
      Moreover, each region in $\Q_{i+1}$ contains $O(1)$ regions from~$\Q_{i}$.
\item Each region in $\Q_0$ contains at most one point from~$P$.
\end{itemize}
We can view $\Q_t,\ldots,\Q_1,\Q_{0}$ as the levels of a tree~$\tree$, 
where a node $\nu$ at level~$i$ corresponds to a region~$R_{\nu}$ of
the subdivision~$\Q_i$. In particular, the root of~$\tree$ corresponds
to the starting square~$\sigma_0$ and the leaves of~$\tree$ correspond to the regions
in~$\Q_0$ (which is the compressed-quadtree subdivision for the whole point set~$P$ with
root~$\sigma_0$). We 
call this tree structure\footnote{The skip quadtree defined by Eppstein, Goodrich, and Sun
is slightly more complicated, as it stores more information than just the
compressed-quadtree subdivisions. Since we do not need our structure to be
dynamic, the simple hierarchy described above suffices.}
a \DEF{skip quadtree} for~$P$ with root~$\sigma_0$. It 
can be constructed in $O(n \log n)$ time~\cite{EppsteinGS08}.
\begin{figure}
\begin{center}
\includegraphics{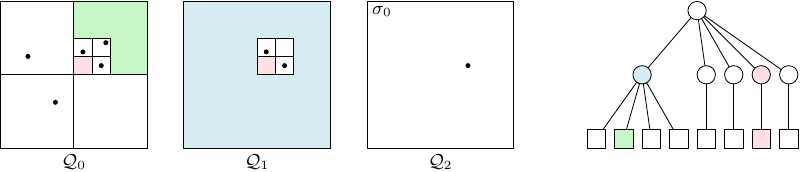}
\end{center}
\caption{Example of the three compressed-quadtree subdivisions for a set of five points, and the resulting skip quadtree. Colors indicate the correspondence between
regions and nodes.}
\label{fig:quadtree2}
\end{figure}
\medskip

Without loss of generality, we will assume henceforth that all the disks of $\D$ are fully
contained in the unit square $[0,1)\times [0,1)$.
Let $\Xi$ be the collection of three hierarchical grids given by 
Lemma~\ref{lem:shifted-quadtree}.
Each disk~$D$ of $\D$ is $6$-aligned with some hierarchical grid $\Gamma\in \Xi$,
and we can find such a hierarchical grid for $D$ in constant time.

Now consider a hierarchical grid $\Gamma$ from the collection $\Xi$. 
Let $\D(\Gamma)\subset \D$ be the set of disks that are 6-aligned with~$\Gamma$, 
where we put a disk that is 6-aligned with multiple grids  
into an arbitrary one of the corresponding sets $\D(\Gamma)$.
We create a set $\C(\Gamma)$ of cliques, as follows. 
\begin{enumerate}
\item \label{step1} Construct a skip quadtree~$\tree(\Gamma)$ on the set~$P(\Gamma)$
	  of centers of the disks in~$\D(\Gamma)$,
      where the starting square~$\sigma_0$ of the skip quadtree is a cell of~$\Gamma$ 
	  that contains all the disks in~$\D(\Gamma)$.
\item \label{step2} For each disk $D\in\D(\Gamma)$, follow a path in $\tree(\Gamma)$ consisting of 
      nodes~$\mu$ whose region~$R_{\mu}$ contains the center of~$D$, until a node~$\nu$
      is reached such that $D$ intersects~$\bd R_{\nu}$ or a leaf is reached.
      Assign $D$ to that node or leaf.
\item \label{step3} Let $\D_{\nu}\subset\D(\Gamma)$ be the set of disks assigned to~an 
      (internal or leaf) node~$\nu$ in $\tree(\Gamma)$. For each node $\nu$, partition $\D_{\nu}$ 
      into a set $\C_{\nu}$ of $O(1)$ cliques, as follows.  If~$\nu$ is a leaf and $R_{\nu}$ 
      contains a disk---note that a leaf can contain at most one disk---then 
	   we create a singleton clique for that disk. Now suppose $\nu$ is an internal node. 
      First, suppose $R_{\nu}$ 
      is a square region.
      Because the disks in $\D_{\nu}$ are 6-aligned with~$\Gamma$, we know
      that $\size(D)\geq \tfrac{1}{6}\cdot \size(R_{\nu})$ for all $D\in\D_{\nu}$
	  intersecting~$\bd R_{\nu}$.
      Hence, we can create a set of $O(1)$ points such that each disk $D$
      is stabbed by at least one of them, which we use to create~$\C_{\nu}$. 
      If $R_{\nu}$ is a donut we proceed similarly, where we treat the
      disks intersecting the boundary of the hole of the donut---this
      hole can be much smaller than the donut---separately
      from the disks that only intersect the outer boundary of the donut.
      Finally, set $\C(\Gamma) := \bigcup_{\nu\in \tree(\Gamma)} \C_{\nu}$.      
\end{enumerate}
Each clique that is generated by our algorithm is a so-called
\DEF{stabbed clique}, that is, for each clique $C\in\C(\Gamma)$ there is a point stabbing 
all disks in~$C$. We refer to the union of all the disks in~$C$ as a \DEF{flower},
which we denote by~$F(C)$, and we let $\F(\Gamma) := \{ F(C) : C\in \C(\Gamma)\}$
be the set of flowers corresponding to the cliques in~$\C(\Gamma)$. 
Define $\ply(S)$, the \DEF{ply} of a set $S$ of objects in the plane, to be the maximum, over
all points  $q\in \Reals^2$, of the number of objects in~$S$ containing~$q$. 
The following lemma states a property of our algorithm that will be crucial
to efficiently compute the edges in the clique-based contraction.
\begin{lemma}\label{lem:clique-construction}
The algorithm above constructs in $O(n\log n)$ time a partition of $\D(\Gamma)$ into a set $\C(\Gamma)$ of stabbed cliques such that $\ply(\F(\Gamma))=O(\log n)$.
\end{lemma}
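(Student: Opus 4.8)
The plan is to prove the two parts of Lemma~\ref{lem:clique-construction} separately: first the running time, then the ply bound on the flowers.

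\medskip\noindent\textbf{Running time.}
Step~\ref{step1} constructs a skip quadtree on $|P(\Gamma)|\leq n$ points, which takes $O(n\log n)$ time by the cited result of Eppstein, Goodrich and Sun~\cite{EppsteinGS08}. For Step~\ref{step2}, I would argue that the skip quadtree $\tree(\Gamma)$ has height $O(\log n)$ and that at each level the regions containing the center of a fixed disk~$D$ are $O(1)$ in number (since each region of $\Q_{i+1}$ contains $O(1)$ regions of $\Q_i$), so the descent for each $D$ visits $O(\log n)$ nodes; summing over $\D(\Gamma)$ gives $O(n\log n)$. Assigning $D$ to the node where it first crosses a region boundary is an $O(1)$ test per visited node (checking whether $D$ intersects $\bd R_\mu$ for a square or donut $R_\mu$). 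For Step~\ref{step3}, the key point is that $\size(D)\geq \tfrac16\size(R_\nu)$ for every $D\in\D_\nu$ intersecting~$\bd R_\nu$ (by Observation~\ref{obs:c-aligned} applied with $c=6$, since $D$ is $6$-aligned with $\Gamma$ and $R_\nu$ is a cell of $\Gamma$, or a donut which is a difference of two cells so that its outer and inner boundaries are cell boundaries). Hence the disks crossing $\bd R_\nu$ are large relative to $R_\nu$, and a standard argument for stabbing fat objects of comparable size shows that $O(1)$ well-chosen points suffice to stab all of them — for instance, place a point in each quadrant (or constant-size sub-grid) of $R_\nu$, and argue that any such disk contains one of these points. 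The disks in $\D_\nu$ that are \emph{contained} in $R_\nu$ can only be the (at most one) disk assigned at a leaf, or are handled by the recursion into children; so each region contributes $O(1)$ cliques, and, crucially, $\sum_\nu |\D_\nu| = |\D(\Gamma)|$ since every disk is assigned to exactly one node, giving $O(n)$ total work in Step~\ref{step3}. For the donut case I would spell out that a disk can cross the outer boundary, the inner (hole) boundary, or both, and that each family is handled by its own $O(1)$ stabbing set at the appropriate scale.

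\medskip\noindent\textbf{Ply bound.}
This is the main obstacle. I need to show that for any point $q\in\Reals^2$, the number of flowers $F(C)$, $C\in\C(\Gamma)$, containing $q$ is $O(\log n)$. I would fix $q$ and, for each level~$i$ of the skip quadtree, count how many cliques with node at level~$i$ have a flower containing~$q$. The flowers come from stabbed cliques at nodes~$\nu$ whose disks all cross $\bd R_\nu$ (plus $O(1)$ singleton leaf cliques, which contribute at most $O(\log n)$ total since each flower is a single disk along the search path of~$q$). For the interior-node cliques: if $q\in F(C)$ for a clique $C$ at node $\nu$, then some disk $D\in C$ contains $q$, and $D$ crosses $\bd R_\nu$, so $\size(D)\geq\tfrac16\size(R_\nu)$ and $D$ lies within $O(1)$ cells of $\Gamma$ of size $\size(R_\nu)$; hence $q$ is within distance $O(\size(R_\nu))$ of $R_\nu$, and moreover $\size(R_\nu)=O(\size(D))$ — but $\size(D)$ can be large, so I need the other direction: $\size(D)\leq 6\,\size(\sigma)$ for the cell $\sigma\supseteq D$ that witnesses $6$-alignment, and $\nu$ being the first boundary-crossing node means $R_\nu$ is not much smaller than $D$. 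More precisely, for fixed $q$ and fixed level~$i$, the regions $R_\nu$ at level~$i$ that lie within $O(\size(R_\nu))$ of $q$ number only $O(1)$ (regions at one level of a compressed-quadtree subdivision are interior-disjoint and of comparable size only locally — here I must be careful because donut cells span many scales). The cleanest route is: charge each flower $F(C)$ containing $q$, with $C$ at level $i$, to the level $i$; show at most $O(1)$ flowers per level contain $q$; and there are $O(\log n)$ levels. To get $O(1)$ per level I would argue that a disk $D\in C$ with $q\in D$ and $D$ crossing $\bd R_\nu$ forces $R_\nu$ to be one of $O(1)$ regions of $\Q_i$ (namely those whose closure is within distance $\diam(D)=O(\size(R_\nu))$ of $q$ and whose size is $\Theta(\size(R_\nu))$), using that the $\Q_i$ regions partition $\sigma_0$ and that crossing $\bd R_\nu$ pins $\size(R_\nu)$ between $\tfrac16\size(D)$ and (via $6$-alignment and the structure of the compressed quadtree along the descent) $O(\size(D))$. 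I expect the delicate part to be the donut cells, where $R_\nu$ has two very different scales: a disk crossing only the outer boundary must be handled using the outer scale, a disk crossing the inner boundary using the inner scale, and I would treat the corresponding sub-cliques as if they lived at two different "effective levels," each still giving $O(1)$ flowers through $q$ per level and $O(\log n)$ levels overall, so the total remains $O(\log n)$.
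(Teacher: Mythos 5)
Your running-time analysis is correct and essentially matches the paper (skip-quadtree construction in $O(n\log n)$, an $O(\log n)$ descent per disk, and $O(1)$ stabbing points per node because disks crossing $\bd R_\nu$ have size $\geq\tfrac16\size(R_\nu)$ by Observation~\ref{obs:c-aligned}).

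The ply bound is where you go wrong, and the gap is genuine. Your plan is to show $O(1)$ flowers per level by a packing argument: a disk $D\ni q$ assigned to $\nu$ forces $R_\nu$ to be one of $O(1)$ regions at that level whose size is $\Theta(\size(D))$ and that lie within $O(\size(R_\nu))$ of $q$. Two problems. First, you only have the one-sided bound $\size(R_\nu)\leq 6\,\size(D)$; the algorithm does not give $\size(R_\nu)\geq\Omega(\size(D))$, because $\nu$ can be a highly compressed child of its parent $\mu$ while $D$ (centered in $R_\nu$) sticks far out of $R_\nu$. Second, even granting comparable sizes, a single level $\Q_i$ of a compressed-quadtree subdivision contains regions of wildly different scales near $q$, so ``$O(1)$ regions within $O(\size(R_\nu))$ of $q$'' is not a uniform packing statement -- as written it could admit $\Theta(\log n)$ regions per level, degrading the bound to $O(\log^2 n)$. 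And you explicitly defer the donut case, which is exactly where this breaks down.

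The paper avoids all of this with a containment observation that you did not find: if $D\ni q$ is assigned to $\nu$, then at the parent $\mu$ of $\nu$ the descent did \emph{not} stop, so $D$ does not cross $\bd R_\mu$; together with the center of $D$ lying in $R_\mu$ this gives $D\subset R_\mu$, hence $q\in R_\mu$, hence $\mu$ lies on the search path of $q$. The search path has $O(\log n)$ nodes, each with $O(1)$ children, so there are $O(\log n)$ candidate nodes $\nu$, and each node contributes $O(1)$ flowers. No size estimates, no packing, and donut cells are handled for free. You should replace your per-level packing argument with this containment step.
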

\begin{proof}
Step~\ref{step1} of the algorithm, constructing the skip quadtree $\tree(\Gamma)$ 
on the set~$P(\Gamma)$ of the centers of the
at most~$n$ disks in~$\D(\Gamma)$, can be done in~$O(n\log n)$
time~\cite{EppsteinGS08}. The depth of the skip quadtree is
$O(\log n)$, and so Step~\ref{step2} takes $O(\log n)$ time per disk and
$O(n\log n)$ time in total. Step~\ref{step3} takes $\sum_{v\in \tree(\Gamma)} O(1+|\D_v|)$ time, which is $O(n)$ because each disk is assigned to exactly one node.
Hence, the total time for the algorithm is~$O(n\log n)$.

To bound~$\ply(\F(\Gamma))$, consider an arbitrary point~$q\in\Reals^2$.
Recall that for each node $\nu$ in $\tree(\Gamma)$ we created $O(1)$ stabbed 
cliques and, hence, $O(1)$ flowers. Thus, it suffices to prove the following claim:
at each level of the skip quadtree~$\tree(\Gamma)$ there are only $O(1)$
nodes $\nu$ such that $q\in D$ for some disk~$D$ assigned to~$\nu$.
To prove this claim, follow the search path of $q$ in $\tree(\Gamma)$, 
that is, the path consisting of nodes $\mu$ such that $q\in R_{\mu}$. 
Let $D\in \D$ be a disk that
contains~$q$, let $\nu$ be the node to which $D$ is assigned, and let $\mu$
be the parent of~$\nu$. Then $\mu$ must lie on the search path of~$q$.
Indeed, $D$ is fully contained in $R_{\mu}$---otherwise $D$ would have been 
assigned to~$\mu$---and so $q\in D\subset R_{\mu}$. Since any node in a skip quadtree
has $O(1)$ children, this implies the claim.
\end{proof}

\noindent \emph{Remark: relation to the spanner construction of Chan and Huang.}
In their algorithm for constructing a 3-hop spanner for~$\ig[\D]$, 
Chan and Huang~\cite{ChanH23} create a collection
of cliques in a similar way as we create our cliques. As they are only concerned
with the size of the spanner, however, they do not analyze the time needed to construct it.
The differences between our approach and theirs mainly stem from the necessity
to construct the clique-based contraction~$\cH$ is $O(n\log n)$ time,
as discussed next.

A first difference is that Chan and Huang use a centroid decomposition of a 
compressed quadtree instead of a skip quadtree.
This is not a crucial difference---possibly we could have used a centroid decomposition
as well, although working with a skip quadtree is more convenient.
Another difference is that they work with a collection~$\Xi$ of hierarchical grids---they
call them shifted quadtrees---such that,
for each pair $D,D'\in\D$, there is at least one hierarchical grid $\Gamma\in\Xi$ such that 
both $D$ and $D'$ are aligned with~$\Gamma$. This allows them to construct their 
spanner by applying the following procedure to each quadtree~$\Q$ defined by the hierarchical
grid~$\Gamma$: 
(i)~take a centroid node $\nu$ in a compressed quadtree~$\Q$ and
partition the set~$\D_{\nu}$ of disks intersecting $\bd \sigma_{\nu}$ into $O(1)$ cliques, 
(ii)~for each clique~$C$, add a star graph on that clique to the spanner,
(iii)~for each disk $D$ and each clique~$C$ intersected by~$D$, add an edge $(D,D')$ to the spanner, 
where $D'\in C$ is an arbitrary disk intersected by~$D$, and
(iv)~recursively construct 3-hop spanners for the disks inside and outside $\sigma_{\nu}$.
The crucial difference lies in step~(iii), where they add an edge from every disk~$D$
to a neighboring disk in each clique (if such a neighboring disk exists). It seems difficult to do this
in $O(n\log n)$ time in total over all recursive calls. 
We therefore proceed differently: we first recursively construct cliques on the disks not intersecting~$\bd \sigma_{\nu}$, and then we only add a single edge between
two cliques if the corresponding flowers intersect. Such an approach
would increase the spanning ratio if we were to use it construct a spanner,
but this is of no concern to us. The advantage of being more conservative
is that we can compute the edges in our clique-based contraction
in $O(n\log n)$ time, as explained next.

\subparagraph{Computing the edges in the clique-based contraction.}
Let $\C := \bigcup_{\Gamma\in\Xi} \C(\Gamma)$ be the set of cliques generated by applying 
the procedure above to each hierarchical grid $\Gamma$ in the collection~$\Xi$ given 
by Lemma~\ref{lem:shifted-quadtree}. Let $\F := \bigcup_{\Gamma\in \Xi} \F(\Gamma)$ 
be the corresponding flower set. Because of Lemma~\ref{lem:clique-construction},
the construction of $\C$ takes $O(n \log n)$ time and $\ply(\F)=O(\log n)$. 
To construct the edge set~$E_{\cH}$  
of the clique-based contraction~$\cH=(\C,E_{\cH})$, we must find all 
pairs~$F,F'\in \F$ that intersect. 
Next we show how to do this in $O(n\log n)$~time.

The algorithm below may report a pair of intersecting flowers multiple times,
but this is not a problem. In total it produces $O(n\log n)$ edges for~$E_{\cH}$,
counted with multiplicity. We can just work with the multigraph or 
we may clean the multigraph removing copies of edges to get the underlying simple graph;
this cleaning step takes linear time in the size of the 
multigraph~\cite[Exercise~22.1-4]{CLRS09}, which is $O(n\log n)$.

Note that two flowers $F,F'$ intersect iff the following holds: 
a boundary arc of $F$ intersects a boundary arc of $F'$, or $F\subset F'$, 
or $F'\subset F$. We will first concentrate on the intersection between
boundaries of the flowers, and then consider the containment between flowers.
\medskip

Consider a flower $F$ defined by some clique~$C$. The boundary 
$\bd F$ of $F$ is comprised of maximal pieces of the boundaries of 
the disks in $C$ that show up on~$\bd F$. We call these pieces 
\DEF{boundary arcs} and we denote the set of boundary arcs of a flower~$F$ by~$\B(F)$. For a set $\F$ of flowers, we define
$\B(\F) := \bigcup_{F\in\F} \B(F)$ to be the set of boundary arcs of the flowers in~$\F$.
\begin{lemma} \label{lem:flower-intersections}
Let $\F$ be a set of flowers that consist of $n$ disks in total.
Then the number of intersections between the boundary arcs in $\B(\F)$ is $O(n\cdot \ply(\F))$.
\end{lemma}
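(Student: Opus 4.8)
The plan is to bound the number of boundary-arc intersections by a standard charging argument that exploits two facts: (a) each flower $F$ is a union of stabbed disks, so $\bd F$ is an $x$-monotone-ish closed curve whose complexity is linear in the number of disks in $C$; and (b) the ply of $\F$ is $O(\log n)$, which limits how many flowers can simultaneously cover a single point. First I would observe that since each clique $C$ is stabbed by a point $q_C$, every disk $D\in C$ contains $q_C$, so the union $F(C)$ is a star-shaped region with respect to $q_C$. Consequently $\bd F(C)$ intersects any ray from $q_C$ exactly once, and the total number of boundary arcs of $F(C)$ is $O(|C|)$ (each disk contributes at most one arc to $\bd F(C)$, by the usual upper-envelope-of-pseudodisks-through-a-common-point argument). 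Summing, $|\B(\F)| = O(n)$.

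Next I would set up the charging. Let $a$ be a boundary arc of a flower $F$ and $a'$ a boundary arc of a flower $F'\neq F$ that cross at a point $x$. I want to charge this crossing to one of the two flowers in a way that each flower receives only $O(1)$ charges per unit of "ply budget" it consumes. The key geometric claim I would prove is: \emph{if the boundary arc $a\subset\bd D$ (with $D\in C$, $F=F(C)$) crosses $\bd F'$ at a point $x$, then the stabbing point $q_{C'}$ of $C'$ lies inside $D$, or the stabbing point $q_C$ lies inside $D'$}, where $D'\in C'$ is the disk supplying the arc $a'$. The reason is that two disks $D,D'$ that cross must have one center-to-center configuration in which either $D$ contains a point on $\bd D'$ that is "deep" in $D'$, etc.; more robustly, since $F$ and $F'$ are each star-shaped from their stabbing points, a crossing of their boundaries forces one stabbing point to see "past" the other flower's boundary, i.e.\ to lie in the interior of the crossing disk. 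I would make this precise via the observation that at a boundary-boundary crossing, locally $F$ and $F'$ "swap sides," and because each is star-shaped from its own center, at least one of $q_C\in D'$ or $q_{C'}\in D$ must hold (otherwise both boundaries would be simultaneously "outer" near $x$, contradicting that they cross). Given this claim, each crossing is charged to the pair $(D', q_C)$ with $q_C\in D'$ (or symmetrically). For a fixed disk $D'$, the number of stabbing points $q_C$ lying inside $D'$ is at most $\ply(\F)$ (each such $q_C$ is a point contained in the flower $F(C)$… — more carefully, I'd argue each such $q_C$ witnesses that $D'$ meets $F(C)$, and I'd relate this count to the ply of the set of flowers, or of a related set of points, getting an $O(\ply(\F))$ bound), and for each such pair the number of crossings on a single pair of boundary arcs is $O(1)$ since two disk boundaries cross at most twice. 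Summing over all disks $D'$ gives $O(n\cdot\ply(\F))$.

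The main obstacle I anticipate is making the geometric claim ("a boundary crossing forces a stabbing point into the crossing disk") fully rigorous, since it is the crux and is easy to get wrong at degenerate configurations — for instance when the crossing point $x$ is covered by many disks of both cliques, or when one flower is nearly contained in the other. I would handle this by working with the specific disk whose boundary arc is involved rather than the whole flower: if $a\subset\bd D$ appears on $\bd F$, then no other disk of $C$ covers the points of $a$, and star-shapedness of $F$ from $q_C$ then pins down the sign of $(x-q_C)$ relative to $\bd D$; doing the same for $D'$ and comparing the two local pictures at $x$ yields the dichotomy. A secondary, more routine obstacle is the counting step: I must confirm that "number of stabbing points inside a fixed disk $D'$" is genuinely $O(\ply(\F))$ and not merely $O(n)$; for this I would note that if $q_C\in D'$ and $a'\subset\bd D'$ is on $\bd F'$, then by the same local argument $D'$ (hence a small neighborhood of the relevant point) lies in $F(C)$, so these flowers all overlap in a common region, whence there are at most $\ply(\F)$ of them. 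I expect the write-up to be two to three paragraphs: one establishing $|\B(\F)|=O(n)$ and star-shapedness, one proving the crossing-implies-stabbing-point-inside dichotomy, and one doing the charging and summation.
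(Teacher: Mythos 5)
Your key geometric claim is false, and this is not a degenerate-configuration subtlety: it already fails for two singleton cliques. Take $C=\{D\}$ with $D$ of radius~$1$ centered at the origin and stabbing point $q_C=(0,0)$, and $C'=\{D'\}$ with $D'$ of radius~$1$ centered at $(1.9,0)$ and stabbing point $q_{C'}=(1.9,0)$. The flowers $F=D$ and $F'=D'$ overlap in a thin lens, so their boundary arcs cross at two points, yet $q_C\notin D'$ and $q_{C'}\notin D$. The intuition ``a boundary crossing forces one stabbing point into the other disk'' does not follow from star-shapedness: at a crossing $x$ the two circles locally split the plane into four wedges, $q_C$ must merely lie on the $D$-side and $q_{C'}$ on the $D'$-side, and nothing prevents $q_C$ from sitting in the $D\setminus D'$ wedge while $q_{C'}$ sits in the $D'\setminus D$ wedge. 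Your parenthetical ``otherwise both boundaries would be simultaneously outer near $x$'' does not produce a contradiction.

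The second step you flag as a ``secondary, more routine obstacle'' is also not routine. Even if the dichotomy held, ``the number of stabbing points $q_C$ lying inside a fixed disk $D'$ is $O(\ply(\F))$'' is not true in general: a large $D'$ can contain arbitrarily many pairwise-disjoint flowers, hence arbitrarily many stabbing points, while $\ply(\F)$ stays $O(1)$. You would need to show that the crossings themselves cluster, which requires a local argument near the crossing point $x$, not near the (possibly distant) stabbing point $q_C$. Nothing in your outline closes this.

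The paper takes a different and more robust route that sidesteps stabbing points entirely. It covers each boundary arc $\beta\subset\bd D$ by a circular sector inside $D$ spanning $\beta$, shows the sectors of a single flower are pairwise non-overlapping (this uses only that the arcs lie on $\bd F$ and the triangle inequality, not where the stabbing point is), then perturbs the apex so each sector becomes fat and convex. Since the sectors have ply $O(\ply(\F))$ and are fat, a standard ``charge the crossing to the smaller region'' packing argument bounds the number of sector--sector (hence arc--arc) intersections by $O(n\cdot\ply(\F))$. The fatness of the sectors is doing the real work, and it is exactly the ingredient your charging scheme lacks. I would recommend abandoning the stabbing-point dichotomy and instead covering arcs by fat non-overlapping pieces of the flower, as the paper does.
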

\begin{proof} 
First, observe that $|\B(F)|=O(n_F)$, where $n_F$ is the number of disks defining a flower~$F\in\F$, 
since the union of a set of
disks has linear complexity. Hence, $|B(\F)|=O(n)$.

Now consider a boundary arc $\beta\in\B(F)$ of some flower $F\in \F$. We assume 
for simplicity that the angle spanned by the circular arc $\beta$ is at most~$\pi/2$; 
if this is not the case we can split $\beta$ into at most four sub-arcs, 
and work with the sub-arcs instead. Let $D$ be the disk that contributes
the arc~$\beta$, let $c$ be the center of~$D$, and let $s_1$
and $s_2$ be the segments that connect $c$ to the endpoints of $\beta$.
Together with~$\beta$, these two segments bound a (convex) circular sector.
Let $Z_F$ be the set of such circular sectors created for the arcs in $\B(F)$; 
see Figure~\ref{fig:sectors}(i).
\begin{figure}
\begin{center}
\includegraphics[width=\textwidth]{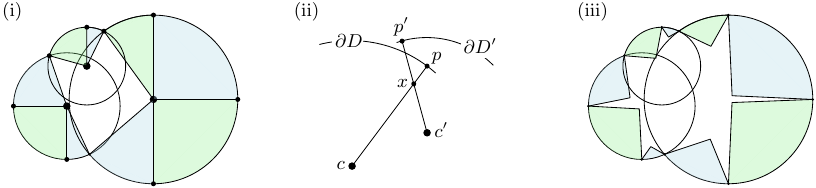}
\end{center}
\caption{(i) The set $Z_F$ of circular sectors defined by the boundary arcs of a flower~$F$. Some boundary arcs have been cut into sub-arcs to ensure they span an angle of at most~$\pi/2$. 
(ii)~If $|px|<|p'x|$ then $p\in D'$; otherwise $p'\in D$. 
(iii)~The modified sectors.} 
\label{fig:sectors}
\end{figure}
We say that two sectors $z,z'$ are \DEF{non-overlapping} if their interiors are disjoint.
\begin{claiminproof}
The sectors in the set $Z_F$ are pairwise non-overlapping.
\end{claiminproof}
\begin{proofinproof}
Consider two sectors $z,z'\in Z_F$. Let $D,D'$ be the disks whose boundaries
define $z$ and $z'$, respectively, and let $c$ and $c'$ be their centers.
If $D=D'$ then obviously $z$ and $z'$ do not overlap, so assume that $D\neq D'$. 
Since a sector in $Z_F$ cannot be fully contained in another sector in $Z_F$ 
by construction, the sectors $z,z'$ can only overlap if there is a proper 
intersection between their boundaries. Such an intersection can only happen between 
a segment~$s$ connecting $c$ to an endpoint~$p$ of some boundary arc $\beta\subset \bd D$
and a segment~$s'$ connecting $c'$ to an endpoint $p'$ of some boundary arc $\beta'\subset \bd D'$.
But then, by the triangular inequality, either $p\in D'$ or $p'\in D$---see 
Figure~\ref{fig:sectors}(ii) for an illustration---contradicting that $p$ and $p'$ 
are both points on $\bd F$.
\end{proofinproof}
Now consider a circular sector~$z$, defined by some arc~$\beta$. Let $x$ be
the apex of $z$ (which coincides with the center of the disk $D$ 
contributing $\beta$). If the angle at apex~$x$ is large enough then $z$ is
fat, but if the angle is very small then this is not the case. We therefore
proceed as follows: we move the apex~$x$ towards the midpoint of $\beta$ until
the angle at $x$ is slightly larger than~$\pi/2$; see Figure~\ref{fig:sectors}(iii). 
It is easy to see that the
modified region (which is no longer a circular sector) is fat and convex.
Moreover, no two modified regions share an apex anymore.
With a slight abuse of notation, from now on we use $Z_F$ to denote the set 
of modified regions created for a flower~$F\in\F$.

Let $Z(\F) := \bigcup_{F\in \F}Z_F$. Note that $|Z(\F)|\leq 4\cdot|\B(\F)|=O(n)$ and
that $\ply(Z(\F))\leq 2\cdot \ply(\F)=O(\log n)$; the latter is true since the regions
created for a single flower~$R$ are pairwise disjoint except at shared endpoints
of boundary arcs. 
For each arc~$\beta\in \B(\F)$, there is a region $z\in Z(\F)$ such that $\beta \subset \bd z$. 
Hence, the number of intersections between the arcs in $\B(\F)$ is upper bounded
by the number of intersections between the regions in~$Z(\F)$.
\begin{claiminproof}
The number of intersections between the regions in~$Z(\F)$ is $O(|Z(\F)| \cdot \ply(Z(\F)))$.
\end{claiminproof}
\begin{proofinproof}
The proof is standard, but we sketch it for completeness. We charge the intersection between two regions $z,z'$ to the smaller of the two
regions, and claim that each region is charged $O(\ply(Z(\F)))$ times. To see this, consider a region $z\in Z(\F)$ and let $b$ be
the smallest enclosing disk of~$Z$. Let $b'$ be the disk with the same center as~$b$ such that $\radius(b')=2\cdot\radius(b)$. Due to the fatness of the regions in $Z(\F)$, any region $z'$ that intersects $z$ and is at least as large as~$z$, will cover a constant
fraction of~the area of~$b'$. Hence, there can be at most $O(\ply(Z(\F)))$ such regions~$z'$.
\end{proofinproof}
This finishes the proof.
\end{proof}
For each clique~$C\in\C$, we can compute the corresponding flower~$F(C)$ 
in $O(|C|\log|C|)$ time using a simple divide-and-conquer algorithm.
Indeed, merging the flowers resulting
from two recursive calls can be done in linear time,
because the boundary arcs of a flower are sorted in circular order
around the stabbing point of the flower.
Since each disk is part of only one clique, this implies that the flower set~$\F$ 
can be computed in $O(n\log n)$ time. Moreover, 
the total number of boundary arcs in the set $\B(\F)$ of boundary arcs
is~$O(n)$ because the complexity of a single flower is linear.

Lemmas~\ref{lem:clique-construction} and~\ref{lem:flower-intersections} together
imply that the number of intersections in the set $\B(\F)$ of boundary arcs
is $O(n\log n)$. This immediately gives us an efficient approach to
find all pairs $F,F'\in\F$ of flowers whose \emph{boundaries} intersect: 
Simply compute all intersections in the set $\B(\F)$ of boundary
arcs, and for every pair of intersecting arcs $\beta\in \B(F)$ and $\beta'\in \B(F')$, 
add $(C,C')$ to~$E_{\cH}$ where $C,C'\in \C$ are the cliques defining the 
flowers~$F,F'$. These intersections can be computed in $O(n\log n + k)$ time,
where $k$ is the number of intersections, using a deterministic algorithm by Balaban~\cite{Balaban95}.
In our setting $k=O(n\log n)$, so the algorithm runs in $O(n\log n)$ time.
\medskip

It remains to compute the containment between flowers.
A natural approach for this is the following.
First, construct the arrangement~$\A(\F)$ induced by the flower set~$\F$.
Next, traverse the dual graph of~$\A(\F)$,
maintaining a list $\cL$ of flowers containing the face we are currently in. 
Whenever we enter a flower~$F$ for the first time during the traversal, 
we report the pairs $(F,F’)$ for the faces $F’$ that are currently in the list~$\cL$. 
This simple approach leads to the following result.
\begin{proposition} \label{prop:clique-cover-graph-expected}
Let $\D$ be a set of $n$ disks in the plane. Then we can compute a clique-based contraction
for $\ig[\D]$ in $O(n\log n)$ expected time.
\end{proposition}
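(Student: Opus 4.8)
The plan is to assemble the proposition from the ingredients already developed together with the arrangement-traversal idea sketched just above the statement. Recall that Lemma~\ref{lem:clique-construction} gives, in $O(n\log n)$ time, a partition $\C=\bigcup_{\Gamma\in\Xi}\C(\Gamma)$ of $\D$ into stabbed cliques with $\ply(\F)=O(\log n)$; that the flower set $\F$ and its set $\B(\F)$ of $O(n)$ boundary arcs can likewise be computed in $O(n\log n)$ time; and that, by Lemmas~\ref{lem:clique-construction} and~\ref{lem:flower-intersections}, the number $k$ of intersection points among the arcs of $\B(\F)$ is $O(n\log n)$. We have also already argued that all pairs of flowers whose \emph{boundaries} intersect, and the corresponding edges of $E_\cH$, can be reported in $O(n\log n)$ deterministic time via Balaban's algorithm~\cite{Balaban95}. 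So the only remaining task is to produce, for every pair $F,F'$ with $F\subset F'$, the edge of $E_\cH$ between the cliques defining $F$ and $F'$; this is the step that forces the bound to be in expectation.

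For the containment step I would, after a symbolic perturbation making the flowers generic, split the $O(n)$ boundary arcs into $x$-monotone pieces (still $O(n)$ of them) and build the full arrangement $\A(\F)$, stored as a doubly-connected edge list in which every edge is labelled with the unique flower whose boundary carries it (note that distinct flowers come from distinct disks, so no arc is shared). Since $\A(\F)$ has $O(n+k)=O(n\log n)$ vertices, edges, and faces, a randomized incremental construction builds it in $O(n\log n+k)=O(n\log n)$ expected time (see, e.g., \cite{HarPeledBook}). I would then run a depth-first search on the dual graph of $\A(\F)$ from the unbounded face, maintaining for each flower a bit recording whether the current face lies inside it, together with the list $\cL$ of flowers whose bit is set. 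Crossing an arrangement edge flips the bit of the flower labelling that edge and updates $\cL$; and the first time the search enters a face lying inside a flower $F$, it outputs, for every $F'$ currently in $\cL$, the edge of $E_\cH$ between the clique defining $F$ and the clique defining $F'$ (and marks $F$ so this is done only once per flower).

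Correctness rests on the fact that each flower is star-shaped from its stabbing point, hence simply connected, so the faces of $\A(\F)$ contained in a flower $F$ induce a connected subgraph of the dual graph and ``first entry into $F$'' is a single, well-defined event, occurring at a face $f_0$ with $f_0\subset\mathrm{int}(F)$. Since $\cL$ is maintained correctly starting from the (empty) unbounded face, $\cL$ at that moment equals the set of flowers whose interior contains $f_0$. If $F\subset F'$ then $\bd F\cap\bd F'=\emptyset$, so $\mathrm{int}(F)\subset\mathrm{int}(F')$, hence $F'\in\cL$ and the edge is reported; conversely, if $F'\in\cL$ at the entry to $F$ but $F\not\subset F'$, then $F$ has points both inside and outside $F'$, so $\bd F$ and $\bd F'$ cross and this edge has already been produced by the Balaban step --- so no incorrect edges are added, and the only cost is some harmless duplicate reports. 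For the running time of the traversal: each face is visited once and each edge twice, with $O(1)$ work per edge crossing, and $O(|\cL|)=O(\ply(\F))=O(\log n)$ work at each of the at most $|\C|\leq n$ first-entry events, for $O(n\log n)$ in total; cleaning the $O(n\log n)$ reported edges down to a simple graph takes linear time~\cite[Exercise~22.1-4]{CLRS09}. Hence $\C$, $\F$, and all of $E_\cH$ are obtained in $O(n\log n)$ expected time.

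The part I expect to be the real obstacle --- and the sole reason the bound is only in expectation --- is building $\A(\F)$ within the time budget: the $k=O(n\log n)$ intersection points are spread over $O(n)$ arcs, and assembling a deterministic edge list by sorting the intersections along each arc would cost $\Theta(n\log^2 n)$ in the worst case, so one relies instead on the $O(n\log n+k)$ expected-time randomized incremental construction; every other part of the argument is deterministic.
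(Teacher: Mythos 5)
Your proposal is correct and takes essentially the same route as the paper: compute the flower set and boundary-arc intersections deterministically (Balaban), then build the arrangement $\A(\F)$ with a randomized incremental construction and do a DFS on its dual graph maintaining the list $\cL$ of containing flowers, using $\ply(\F)=O(\log n)$ to bound both $|\cL|$ and the number of reported containment pairs. The only substantive difference is cosmetic: you supply a more detailed correctness argument for the dual-graph traversal (the paper leaves it implicit), and you conjecture that the obstacle to determinism is the cost of sorting intersections along arcs, whereas the paper's stated reason is that the known deterministic output-sensitive algorithms either do not build the arrangement (Balaban) or do not handle curved arcs (Chazelle--Edelsbrunner); the paper also pins the randomized construction on Mulmuley's algorithm specifically.
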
 
\begin{proof}
We have already discussed how to compute in $O(n\log n)$ time
the flower set $\F$ and the pairs of flowers whose boundary intersect.
It remains to compute the containment between flowers of $\F$.

We construct the arrangement~$\A(\F)$ in
$O(n\log n + k)$ expected\footnote{Balaban's deterministic algorithm for line-segment
intersection~\cite{Balaban95}, which we used to compute the intersections between
the flower boundaries, also works for curves. Unfortunately, that
algorithm does not seem to give enough information to construct the arrangement.
We cannot use the deterministic algorithm 
for line-segment intersection by Chazelle and Edelsbrunner~\cite{DBLP:journals/jacm/ChazelleE92} either,
because that algorithm does not work for curves.} time, 
where $k$ is the number of intersections between the boundary arcs,
using a randomized incremental algorithm by Mulmuley~\cite{DBLP:journals/jacm/Mulmuley91}.
Since $k=O(n\log n)$ by Lemmas~\ref{lem:clique-construction} 
and~\ref{lem:flower-intersections}, constructing the arrangement thus 
takes $O(n\log n)$ expected time. 
The traversal of the dual graph, including the maintenance of $\cL$ takes 
$O(n\log n)$ time. Since the size of 
the list $\cL$ is $O(\log n)$ at any time---this is because the ply of the 
flower set is $O(\log n)$---we report at most $O(n\log n)$ pairs in total
during the traversal. 
Thus, the total time to compute $E_{\cH}$ is $O(n\log n)$.
\end{proof}
Proposition~\ref{prop:clique-cover-graph-expected} gives a simple randomized 
$O(n\log n)$ algorithm to compute the clique-based contraction.
Randomization is used only to compute the edges $(C,C')$ 
for cliques $C,C'\in \C$ such that $F(C)$ is completely contained in $F(C')$,
or vice versa. 
We next explain a slightly more complicated deterministic approach for this.
We start with a lemma that allows us to determine which points 
are contained in the union of a set of disks (such as a flower).
\begin{lemma}\label{lem:disks+points}
Let $P$ be a set of $n$~points in the plane, and assume we are given the Voronoi 
diagram~$\vd(P)$ of~$P$. Let~$\D$ be a set of $m$~disks.
Then we can compute the subset of points from $P$ that are
contained in $\bigcup \D$ in $O(n+ m\log m)$ time.
\end{lemma}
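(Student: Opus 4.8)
The plan is to translate the problem, via the standard lifting map, into the intersection of two convex polyhedra in~$\Reals^3$, and to use the given Voronoi diagram to produce, in linear time, the polyhedron that encodes~$P$.

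\smallskip
\noindent\emph{Lifting.} Let $\pi(q):=(q_x,q_y,q_x^2+q_y^2)$ be the lift to the unit paraboloid and write $\hat q:=\pi(q)$. For a disk $D\in\D$ with center $c$ and radius $r$, a point $q$ lies in $D$ iff $\|q-c\|^2\le r^2$, i.e.\ iff $\hat q$ lies weakly below the plane $h_D:=\{(x,y,z):z=2c_xx+2c_yy-(c_x^2+c_y^2-r^2)\}$. Hence $q\in\bigcup\D$ iff $\hat q$ lies weakly below the upper envelope $E:=\max_{D\in\D}h_D$, and $q\notin\bigcup\D$ iff $\hat q$ lies in the interior of the convex polyhedron $Q:=\bigcap_{D\in\D}\{(x,y,z):z\ge h_D(x,y)\}$, the epigraph of~$E$. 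As the intersection of $m$ halfspaces in $\Reals^3$, the polyhedron $Q$ can be computed in $O(m\log m)$ time and has complexity~$O(m)$; it is never empty (since $E$ is finite everywhere), but it is unbounded upwards, so if the polyhedron-intersection routine used below requires bounded input we cap $Q$ by a sufficiently large box.

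\smallskip
\noindent\emph{Using the Voronoi diagram.} From $\vd(P)$ we obtain in $O(n)$ time, by duality, the Delaunay triangulation of~$P$, and hence the convex polyhedron $R:=\{(x,y,z):(x,y)\in\mathrm{conv}(P),\ z\ge g(x,y)\}$, where $g$ is the convex piecewise-linear function whose graph is the lifted Delaunay surface (the lower convex hull of~$\hat P$). Then $R$ has complexity~$O(n)$, is unbounded upwards, and its vertex set is exactly $\hat P$: each $\hat p$ is an extreme point of $\hat P$ because the paraboloid is strictly convex, and locally $R$ is the cone above the Delaunay facets incident to~$\hat p$. Now compute $R\cap Q$ in $O(n+m)$ time using the linear-time algorithm for intersecting two convex polyhedra in $\Reals^3$ (Chazelle). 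A point $\hat p\in\hat P$ lies in $\mathrm{int}(Q)$ iff, near~$\hat p$, the intersection $R\cap Q$ coincides with~$R$; therefore $p\notin\bigcup\D$ iff $\hat p$ is a vertex of $R\cap Q$ that does not lie on~$\bd Q$. Each vertex of $R\cap Q$ is tagged by the intersection algorithm as originating from~$R$, from~$Q$, or as a crossing of an edge of one polyhedron with a facet of the other; the vertices we want are the $R$-vertices not incident to any facet inherited from~$Q$, which is decided in $O(1)$ time per vertex. (The only degeneracy, a point of $P$ lying exactly on $\bd(\bigcup\D)$, yields an $R$-vertex on~$\bd Q$ and is correctly reported as lying in~$\bigcup\D$, since the disks are closed.) Reporting the complement within~$P$ of the set obtained in this way is the answer, and the total time is $O(n)+O(m\log m)+O(n+m)=O(n+m\log m)$.

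\smallskip
\noindent\emph{Main obstacle.} The one nontrivial step is the batched comparison of the $n$ lifted points against the $m$-plane envelope~$E$: building a point-location structure on the $O(m)$-complexity minimization diagram of $E$ costs $\Theta(n\log m)$ queries in total, which is too slow when $m\ll n$. The point of being handed $\vd(P)$ is precisely that it yields the ``$P$-side'' convex polyhedron~$R$ in $O(n)$ time, so that the comparison becomes an intersection of two convex $3$-polytopes, solvable in linear time. The parts that need care are checking that $R$ is a bona fide (unbounded) convex polyhedron whose vertex set is $\hat P$, extracting exactly the right vertices of $R\cap Q$ together with the boundary/degeneracy bookkeeping, and handling the unboundedness of $R$ and~$Q$ (capping by a large box) so that the chosen polyhedron-intersection routine applies.
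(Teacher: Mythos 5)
Your proof is correct and reaches the stated bound by the same high-level strategy as the paper: lift to the paraboloid, reduce to an intersection of two convex polyhedra in $\Reals^3$, and exploit the given $\vd(P)$ to obtain the ``$P$-side'' polyhedron in $O(n)$ time so that no point-location over the $m$-plane envelope is needed. The one place where you diverge is the choice of that polyhedron. The paper takes the \emph{Voronoi-side} polyhedron, namely the upper envelope of the tangent planes $\pi_p$ to the paraboloid at the lifted points; this envelope projects vertically to $\vd(P)$ itself, so it is obtained from $\vd(P)$ by lifting each Voronoi cell. Each lifted point $\lambda(p)$ then lies in the relative interior of the facet contributed by $\pi_p$, and the final test is whether $\pi_p$ still contributes a facet (containing $\lambda(p)$ in its interior) to the envelope of $\Pi_P\cup\Pi_\D$. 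You instead take the \emph{Delaunay-side} polyhedron $R$, the epigraph of the lower convex hull of $\hat P$, whose facets project to Delaunay triangles and whose vertices are exactly the lifted points, and you read off the answer by deciding which $R$-vertices of $R\cap Q$ avoid $\bd Q$. Both polyhedra come for free (in $O(n)$ time) from $\vd(P)$, and the correctness arguments are dual to each other: the paper tests membership of a facet interior, you test a vertex. One small advantage of the paper's route is that the per-point test is phrased directly in terms of which planes appear on the final envelope, which sidesteps your bookkeeping about tagging vertices of $R\cap Q$ by origin and about capping the unbounded polyhedra before invoking Chazelle's algorithm; your own remarks already flag those as the points needing care, and they are handled correctly.
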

\begin{proof}
We will transform the problem to computing the intersection of two $3$-dimensional
polyhedra, using the well-known lifting map and the correspondence between Voronoi
diagrams in~$\Reals^2$ and upper envelopes in~$\Reals^3$~\cite{GuibasS85,m-lecture-notes}.

Let $U: = \{(x,y,z)\in \REAL^3 : z=x^2+y^2\}$ be the unit paraboloid in~$\Reals^3$.
Let $\lambda:\Reals^2 \rightarrow U$ be the \emph{lifting map} that maps a point $q=(q_x,q_y)$ 
to the point $\lambda(q) := (q_x,q_y,q_x^2+q_y^2)$ on~$U$. 
It is well known that the image $\lambda(\bd D)$ 
of the boundary of a disk~$D$ is the intersection of a non-vertical plane 
with the unit paraboloid~$U$. We denote this plane by~$\pi_D$.
\factinproof{A}{Let $q\in\Reals^2$ be a point and let $D$ be a disk. Then
$q\not\in D$ iff $\lambda(q)$ lies strictly above~$\pi_D$.}
Let $\Pi_{\D} := \{ \pi_D : D\in D\}$ be the set of planes defined
by the disks in~$\D$, and let $\ue(\Pi_{\D})$ be their upper envelope.
Fact~A implies that $p\not\in \bigcup \D$ iff $\lambda(p)$ lies strictly above $\ue(\Pi_{\D})$.
We now explain how to test this efficiently for all points in~$P$.

Consider a point $p\in P$ and
let $\pi_p$ be the tangent plane of~$U$ at the point~$\lambda(p)$.
Let $\Pi_P := \{ \pi_p : p\in P\}$ and let $\ue(\Pi_P)$ denote the
upper envelope of~$\Pi_P$.  Every plane~$\pi_p\in\Pi_P$ contributes
a facet to $\ue(\Pi_P)$, which contains~$\lambda(p)$ in its interior.
Define $\U := \ue(\Pi_\D \cup \Pi_P)$ to be the upper envelope of $\Pi_\D \cup \Pi_P$. 
The crucial observation is the following.
\begin{claiminproof}
$p\not\in\bigcup \D$ iff $\lambda(p)$~lies 
in the relative interior of a facet of~$\U$. 
\end{claiminproof}
\begin{proofinproof}
As already noted, $\lambda(p)$ lies in the interior of a facet
of the upper envelope of $\Pi_P$. Thus, $\lambda(p)$~does not lie in the relative 
interior of a facet of~$\U$ if and only if some plane~$\pi_D\in \Pi_{\D}$ 
passes through or lies above~$\lambda(p)$. By Fact~A this is precisely 
the condition for~$D$ to contain~$p$.
\end{proofinproof}
We can thus determine which points $p\in P$ are contained in~$\bigcup \D$ as follows:
First, construct $\ue(\Pi_D)$ in $O(m\log m)$ time~\cite{DBLP:series/eatcs/Edelsbrunner87}.
Next, construct $\ue(\Pi_P)$. This can trivially be done in $O(n)$ time because we are
given~$\vd(P)$. Indeed, the vertical projection of $\ue(\Pi_P)$ onto the $xy$-plane
is equal to  $\vd(P)$---see for example~\cite[Section~8]{GuibasS85} 
or~\cite[Section 5.7]{m-lecture-notes}---so all we need to do is lift $\vd(P)$
to $\Reals^3$ in the appropriate manner. (More precisely, the Voronoi cell $\V(p)$
of a point~$p\in P$ is lifted to a facet of $\ue(\Pi_P)$ that projects onto $\V(p)$ and
that is contained in the plane~$\pi_p$.)
Since $\ue(\Pi_\D)$ and $\ue(\Pi_P)$ have complexity $O(m)$ and $O(n)$, respectively,
we can now compute $\U$ in $O(n+m)$ time using a linear-time algorithm
for intersecting two convex polyhedra~\cite{Chan16,Chazelle92}.
Finally, for each $p\in P$ we check if $p\in\bigcup \D$ as follows.
If $\pi_p$ does not contribute a facet to $\U$ then $p\in\bigcup \D$, and
if $\pi_p$ contributes a facet~$f_p$ to $\U$ then $p\not\in\bigcup \D$
iff $\lambda(p)$ lies in the interior of $f_p$. The latter test can be performed
in $O(|f_p|)$ time, where $|f_p|$ is the combinatorial complexity of~$f_p$. 
Since $\sum_{p\in P} |f_p|$ is bounded by the total complexity of $\U$,
which is $O(n+m)$, this takes $O(n+m)$ time in total.
\begin{figure}
\begin{center}
\includegraphics{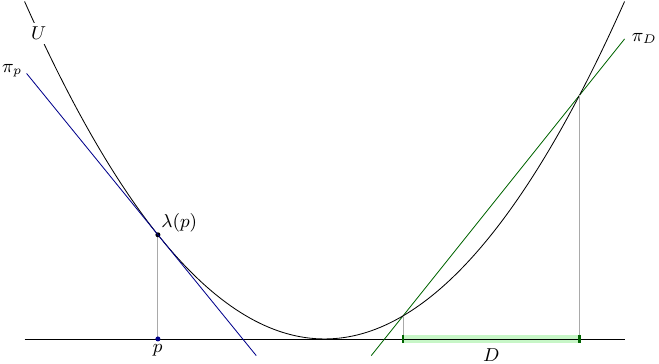}
\end{center} 
\caption{Two-dimensional illustration of the lifting map. Note that a point~$q$
lies outside~$D$ iff its lifted version~$\lambda(q)$ lies  strictly above~$\pi_D$.}
\label{fig:parabola}
\end{figure}
\end{proof}
We can now provide the deterministic algorithm to compute the edges 
of the clique-based contraction.
\begin{proposition} \label{prop:clique-cover-graph}
Let $\D$ be a set of $n$ disks in the plane. Then we can compute a clique-based contraction
for $\ig[\D]$ in $O(n\log n)$ time.
\end{proposition}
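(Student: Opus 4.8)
The plan is to follow the proof of Proposition~\ref{prop:clique-cover-graph-expected} step by step, replacing only the one step that used randomization: the construction of the arrangement $\A(\F)$, which served solely to detect pairs of flowers one of which is contained in the other. That step I would instead carry out deterministically with a batched use of Lemma~\ref{lem:disks+points}.

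First I reduce the containment step to point location. For each clique $C\in\C$ fix a point $s_C$ stabbing all disks of $C$; as we will see, one can choose $s_C$ inside the region of the skip-quadtree node at which $C$ is created. Put $S:=\{s_C : C\in\C\}$, so $|S|\le n$. Because a flower $F(C)$ is star-shaped with respect to $s_C$, two distinct flowers $F(C),F(C')$ intersect if and only if their boundaries intersect, or $s_C\in F(C')$, or $s_{C'}\in F(C)$. Boundary intersections are already computed in $O(n\log n)$ time. For the remaining two cases it suffices to compute, for every clique $C$, the set $\{C'\in\C : s_C\in F(C')\}$, which by $\ply(\F)=O(\log n)$ has size $O(\log n)$; so there are only $O(n\log n)$ such pairs in total, which we can report within the allotted time.

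Next I compute these incidences with the help of the three skip quadtrees $\tree(\Gamma)$. Each flower $F(C)$ is created at a node $\nu$, every disk assigned to $\nu$ is contained in the region of the parent of $\nu$, and the disks of $F(C)$ all contain $s_C\in R_\nu$; consequently $F(C)$ lies in a bounded region $N_\nu$ anchored at $R_\nu$ (of size $O(\size(R_\nu))$ at an uncompressed node, and of size $O(\size(R_{\mathrm{parent}(\nu)}))$ at a compressed node). For every node $\nu$ I invoke Lemma~\ref{lem:disks+points} once for each of the $O(1)$ flowers $F(C')$ created at $\nu$, with point set $S\cap N_\nu$, and report the pair $(C,C')$ for every returned containment $s_C\in F(C')$; since $s_C\in F(C')\subseteq N_\nu$, no pair is missed. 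The Voronoi diagrams required by Lemma~\ref{lem:disks+points} are precomputed: the sets $S\cap R_\nu$ are nested along the refinement hierarchy of the subdivisions $\Q_0,\Q_1,\dots,\Q_t$, so $\vd(S\cap R_\nu)$ can be built for all nodes bottom-up by merging the $O(1)$ child diagrams (whose point sets are separated by quadtree-cell boundaries) in linear time, and $\vd(S\cap N_\nu)$ is obtained by merging in the $O(1)$ diagrams of the neighboring same-size regions. Provided the total size $\sum_\nu|S\cap N_\nu|$ is $O(n\log n)$ and the total number of disks involved is $O(n)$, all of this runs in $O(n\log n)$ time; combined with Proposition~\ref{prop:clique-cover-graph-expected}, with its randomized step thus replaced, this yields the statement.

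The main obstacle is the two claims that the previous paragraph defers: the localization of flowers relative to the skip-quadtree cells, and the bound $\sum_\nu|S\cap N_\nu|=O(n\log n)$. Localization is immediate for an internal node with a square region---there every assigned disk crosses $\bd R_\nu$, hence has size $\Theta(\size(R_\nu))$ by $6$-alignment---but a donut region $\sigma_1\setminus\sigma_2$ must be handled by treating separately the disks crossing the outer boundary $\bd\sigma_1$ and the (possibly much smaller) disks crossing the inner boundary $\bd\sigma_2$, and leaves (whose flower is a single disk contained in the leaf region) need their own easy argument. The counting bound is the delicate part: a single point can lie in the neighborhoods $N_\nu$ of level-$i$ regions of rather different sizes, so one needs a charging argument that exploits both the refinement structure of the $\Q_i$'s and the relation between $\diam(F(C))$ and the sizes of $R_\nu$ and $R_{\mathrm{parent}(\nu)}$, carried out carefully across the compressed edges of $\tree(\Gamma)$ where the relevant neighboring regions need not all be nodes of the skip quadtree.
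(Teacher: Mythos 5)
Your high-level plan---reduce the containment step of Proposition~\ref{prop:clique-cover-graph-expected} to point-in-flower queries, answer those via Lemma~\ref{lem:disks+points}, and precompute the needed Voronoi diagrams bottom-up over the skip quadtree---matches the paper's approach. But the way you route the queries is where the gap lies, and it is a real one, not a deferred technicality.

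You collect points by \emph{geometric} neighborhoods $N_\nu$ (a region of size $O(\size(R_\nu))$ or $O(\size(R_{\mathrm{parent}(\nu)})))$, and you then have to (a) justify that $F(C')\subseteq N_\nu$ for every flower created at $\nu$, (b) build $\vd(S\cap N_\nu)$ for every node, and (c) prove $\sum_\nu |S\cap N_\nu|=O(n\log n)$. As you yourself note, (b) forces merges with ``neighboring same-size regions'' that need not be nodes of $\tree(\Gamma)$, and (c) requires a charging argument across compressed edges that you do not supply. Neither is routine: a single point can lie in many $N_\nu$'s at one level if those neighborhoods overlap across cell boundaries, and the compressed/donut case is exactly where $\size(R_{\mathrm{parent}(\nu)})$ can be unboundedly larger than $\size(R_\nu)$. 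You state the right fact in passing (``every disk assigned to $\nu$ is contained in the region of the parent of $\nu$'') but then do not use it to organize the queries.

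The paper uses that fact directly, which dissolves both problems at once. Since every disk assigned to $\nu$ satisfies $D\subset R_{\mathrm{parent}(\nu)}=\bigcup_{\mu\in\sibl(\nu)}R_\mu$, any point $q$ contained in a flower created at $\nu$ lies in $R_\mu$ for some sibling $\mu$ of $\nu$ (this is exactly the observation, recorded in the proof of Lemma~\ref{lem:clique-construction}, that the parent of $\nu$ lies on the search path of $q$). So one simply queries, for each node $\nu$ and each sibling $\mu\in\sibl(\nu)$, the set $P_\mu$ of points already sorted into $R_\mu$ against the $O(1)$ flowers of $\F_\nu$. Then (a) is automatic; (b) only needs $\vd(P_\mu)$ for tree nodes $\mu$, built bottom-up by merging the $O(1)$ child diagrams; and (c) becomes $\sum_\nu\sum_{\mu\in\sibl(\nu)}|P_\mu|=O(1)\cdot\sum_\mu|P_\mu|=O(n\log n)$ because each node has $O(1)$ siblings and each point appears in $P_\mu$ for $O(\log n)$ nodes $\mu$. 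In short: replace your geometric $N_\nu$ by the \emph{combinatorial} neighborhood $\sibl(\nu)$, and both of the obstacles you flag disappear. (Your star-shapedness remark is correct but unnecessary: $p_F\in F'$ need not imply $F\subset F'$, but it does imply $F\cap F'\neq\emptyset$, and over-reporting intersecting pairs is harmless here.)
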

\begin{proof}
We can follow the proof of Proposition~\ref{prop:clique-cover-graph-expected}, 
except that we need to (deterministically) compute the pairs~$(F,F')\in \F \times \F$ 
such that~$F\subset F'$. To this end we pick an arbitrary point~$p_F$ inside each flower~$F\in \F$
and compute the pairs $(p_F,F')\in P_\F\times \F$ such that $p_F\in F'$, where
$P_\F := \{ p_F : F\in \F\}$. There are only $O(n\log n)$ such pairs because
$\ply(\F)=O(\log n)$ by Lemma~\ref{lem:clique-construction}.
Note that $p_F\in F'$ does not necessarily imply~$F \subset F'$;
however, it does imply $F\cap F'\neq \emptyset$ and therefore
we are still reporting only edges of the clique-based contraction.  
Recall that $\F=\bigcup_{\Gamma\in \Xi} \F(\Gamma)$. Hence, it suffices 
to separately compute, for each of the three hierarchical grids~$\Gamma\in \Xi$,
the pairs $(p_F,F')\in P_\F\times \F(\Gamma)$ such that $p_F\in F'$.

Fix a hierarchical grid $\Gamma\in \Xi$ and consider the corresponding
skip quadtree~$\tree(\Gamma)$. For a node~$\nu$ in~$\tree(\Gamma)$, 
let~$P_\nu$ be the subset of points from~$P_\F$ contained in the region~$R_\nu$.
Recall that~$\D_\nu$ is the set of disks from $\D(\Gamma)$ assigned to~$\nu$,
and that we partitioned $\D_\nu$ into a set $\C_\nu$ of~$O(1)$ stabbed cliques.
Let $\F_\nu=\{ F(C) : C\in \C_\nu\}$ be the flower set corresponding to these cliques.
As shown in the proof of Lemma~\ref{lem:clique-construction}, the assignment
of disks has the following property: if
a point~$q\in\Reals^2$ is contained in a disk~$D\in\D(\Gamma)$ then $D$ must be assigned
to a node on the root-to-leaf search path for~$q$ in~$\tree(\Gamma)$ or to a sibling of such a node. 
Hence, we can compute all pairs $(p_F,F')\in P_\F\times \F(\Gamma)$ such that $p_F\in F'$
with the following algorithm.
\begin{enumerate}
\item \label{step:det1} Construct the skip quadtree $\tree(\Gamma)$. 
\item \label{step:det2} Search with each point $p_F\in P_\F$ in $\tree(\Gamma)$, to construct
      the set $P_\nu$ for each node~$\nu$ in $\tree(\Gamma)$.  
\item \label{step:det3} Construct $\vd(P_\nu)$ for each node~$\nu$ in $\tree(\Gamma)$ in a bottom-up manner,
      so that when we construct $\vd(P_\nu)$ for an internal node~$\nu$
      we already have computed $\vd(P_\mu)$ for all children~$\mu$ of~$\nu$.  
\item \label{step:det4} Let $\sibl(\nu)$ be the set of siblings of a node~$\nu$ in $\tree(\Gamma)$,
      including $\nu$ itself. For each node~$\nu$ of~$\tree(\Gamma)$, each $\mu\in\sibl(\nu)$, 
      and each flower~$F'\in\F_\nu$, compute $P_\mu\cap F'$ using Lemma~\ref{lem:disks+points},
\end{enumerate}
The correctness of the algorithm follows from the discussion above. We now prove
that it runs in $O(n\log n)$ time.

Step~\ref{step:det1} can be done in $O(n\log n)$ time, as argued earlier.
Since $\tree(\Gamma)$ has depth $O(\log n)$, Step~\ref{step:det2} takes
$O(n \log n)$ time as well. Because each node~$\nu$ in $\tree(\Gamma)$ has $O(1)$
children and merging two Voronoi diagrams can be done 
in linear time~\cite{Chan16,Chazelle92,Kirkpatrick79}, Step~\ref{step:det3} 
takes $O(|P_\nu|)$ time at each node~$\nu$. Hence, it takes $O(n \log n)$ time in total. 
To analyze Step~\ref{step:det4}, observe that each flower~$F'\in\F_\nu$
is defined by at most $|\D_\nu|$~disks. Hence, Lemma~\ref{lem:disks+points}
tells us that we spend $O(|P_\mu|+ |\D_\nu| \log n)$ time
for each pair $\nu,\mu$ of siblings in~$\tree(\Gamma)$.
Since $|\F_\nu|=O(1)$ and each node of~$\tree(\Gamma)$
has $O(1)$~siblings, Step~\ref{step:det4} takes time
$\sum_{\nu\in \tree(\Gamma)} O(|P_\nu|+ |\D_\nu| \log n)$ time in total,
which is $O(n\log n)$ because  $\sum_{\nu\in \tree(\Gamma)} |P_\nu| = O(n\log n)$
and $\sum_{\nu\in \tree(\Gamma)} |D_\nu| = O(n)$.
\end{proof}

\subparagraph{Putting it all together.}
Proposition~\ref{prop:clique-cover-graph} gives us the first ingredient we need  
to apply Theorem~\ref{thm:framework}. The second ingredient is an algorithm that solves
\bit for disks. As observed in previous papers~\cite{ChanS19,Klost23},
this can be done in $O((n_B+n_R)\log n_R)$ time, as follows.
First, compute the additively weighted Voronoi diagram $\vd(R)$ 
on the centers of the set $R$ of red disks, where the weight of a center is
equal to the radius of its corresponding disk.
Next, query with the center of each blue disk $D$ in $\vd(R)$ to find
the red disk~$D'$ closest to~$D$. Now $D$ intersects $\bigcup R$, 
the union of the red disks, iff $D$ intersects~$D'$. Since $\vd(R)$
can be computed in $O(n_R \log n_R)$ time~\cite{Fortune87}, after which it can be
preprocessed for logarithmic-time point location in $O(n_R \log n_R)$ 
time~\cite{EdelsbrunnerGS86}, we can solve \bit in $O((n_B+n_R)\log n_R)$ time. We thus
obtain our main result.
\begin{theorem} \label{thm:sssp-for-disks}
Let $\D$ be a set of $n$ disks in the plane. Then we can compute a shortest-path tree 
in $\ig[\D]$ for a given source disk $\dsource\in \D$ in $O(n\log n)$ time.
\end{theorem}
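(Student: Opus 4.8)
The plan is to assemble the two ingredients the paper has just finished developing and feed them into the master reduction, Theorem~\ref{thm:framework}. Concretely, I would argue as follows. By Theorem~\ref{thm:framework}, to compute a shortest-path tree in $\ig[\D]$ from a source $\dsource$ it suffices to exhibit (i) an algorithm that builds a clique-based contraction of $\ig[\D]$ in time $T_{\mathrm{ccg}}(n)$, and (ii) an algorithm that solves \bit for subsets of $\D$ in time $T_{\mathrm{bit}}(n_B,n_R)$; the total running time is then $O(T_{\mathrm{ccg}}(n)+T_{\mathrm{bit}}(n,n))$. For disks, Proposition~\ref{prop:clique-cover-graph} gives $T_{\mathrm{ccg}}(n)=O(n\log n)$ (deterministically), and the additively-weighted Voronoi diagram argument recalled just above the theorem gives $T_{\mathrm{bit}}(n_B,n_R)=O((n_B+n_R)\log n_R)$, hence $T_{\mathrm{bit}}(n,n)=O(n\log n)$. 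Plugging these into Theorem~\ref{thm:framework} yields the claimed $O(n\log n)$ bound.

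In slightly more detail, the \bit subroutine for disks works as follows: given the red set $R$ and blue set $B$, compute the additively weighted Voronoi diagram $\vd(R)$ of the red disk centers (weight $=$ radius) in $O(n_R\log n_R)$ time by Fortune's algorithm~\cite{Fortune87}, preprocess it for point location in $O(n_R\log n_R)$ time~\cite{EdelsbrunnerGS86}, and then for each blue disk $D\in B$ locate its center in $\vd(R)$ to find the red disk $D'$ minimizing (distance $-$ radius). Since $D$ intersects $\bigcup R$ iff $D$ intersects that nearest weighted neighbour $D'$, we can test each blue disk in $O(\log n_R)$ time and report it together with the witness $D'$ when it intersects. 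This gives a correct BIT-Subroutine running in $O((n_B+n_R)\log n_R)$ time, which is at least linear and at most quadratic, as Theorem~\ref{thm:framework} requires.

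There is essentially no hard step left: all the geometric and data-structural work has already been done in Section~\ref{sec:disks} (the shifted-quadtree/skip-quadtree clique construction of Lemma~\ref{lem:clique-construction}, the bound $\ply(\F)=O(\log n)$, the flower-boundary intersection count of Lemma~\ref{lem:flower-intersections}, and the deterministic containment test via Lemma~\ref{lem:disks+points}), and the framework reduction has been proved in Theorem~\ref{thm:framework}. The only point requiring a moment's care is checking that the \bit bound $O((n_B+n_R)\log n_R)$ indeed satisfies the mild hypotheses of Theorem~\ref{thm:framework} — namely that $T_{\mathrm{bit}}$ is at least linear and at most quadratic in its arguments, so that the telescoping sum $\sum_\ell T_{\mathrm{bit}}(n_\ell,|L_\ell|)$ with $\sum_\ell n_\ell\le 3n$ and $\sum_\ell|L_\ell|\le n$ collapses to $O(T_{\mathrm{bit}}(n,n))=O(n\log n)$ — but this is immediate. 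Thus the proof is a two-line citation: apply Theorem~\ref{thm:framework} with $T_{\mathrm{ccg}}(n)=O(n\log n)$ from Proposition~\ref{prop:clique-cover-graph} and $T_{\mathrm{bit}}(n,n)=O(n\log n)$ from the weighted-Voronoi BIT-Subroutine.
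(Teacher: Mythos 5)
Your proposal is correct and follows exactly the same route as the paper: apply Theorem~\ref{thm:framework} with $T_{\mathrm{ccg}}(n)=O(n\log n)$ from Proposition~\ref{prop:clique-cover-graph} and $T_{\mathrm{bit}}(n_B,n_R)=O((n_B+n_R)\log n_R)$ from the additively weighted Voronoi diagram argument (Fortune's algorithm plus point location). No meaningful difference from the paper's own short derivation.
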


\section{Extension to fat triangles}
\label{sec:fat}
We now show how to adapt our approach so that it can solve the SSSP problem 
on intersection graphs of fat triangles.
A triangle $\Delta$ is called \DEF{$\alpha$-fat} it its minimum angle is at least~$\alpha$.
Let $\D=\{\Delta_1,\ldots,\Delta_n\}$ be a set of $\alpha$-fat triangles, where
$\alpha>0$ is some fixed absolute constant. From now on, we simply refer to the triangles
in $\D$ as \DEF{fat triangles} and we refer $\alpha$ as the \DEF{fatness constant}.

\subparagraph{Constructing the clique-based contraction.}
Our approach to construct a clique-based contraction for fat triangles is similar to the 
algorithm for disks. We now go over the various ingredients and explain how to adapt them,
if necessary.

We first observe that Lemma~3.2 from Chan's paper~\cite{C-PTAS-fat} actually holds
for any type of objects. Hence, our Lemma~\ref{lem:shifted-quadtree} also holds
for any type of objects. To construct the skip quadtree, the set $P(\Gamma)$ of disk centers 
is replaced by a set $P(\Gamma)$ that contains an arbitrary point in each object. 
Constructing the cliques can therefore be done in exactly the same way as before. 
Indeed, the crucial property was as follows:
any set of disks intersecting the boundary of a region~$R_{\nu}$ and whose
size is at least~$\tfrac{1}{6} \cdot \size(R_{\nu})$, can be stabbed by~$O(1)$ points.
This property also holds for fat objects. 

Now let $C$ be a stabbed clique of fat triangles. We refer to the union of the
triangles in $C$ as a \DEF{spiky flower}, and we denote it by $F(C)$.
As before, we let $\C(\Gamma)$ denote the set of cliques created by our algorithm,
and we define $\F(\Gamma) := \{ F(C): C \in \C(\Gamma) \}$ to be the corresponding
set of spiky flowers. Then Lemma~\ref{lem:clique-construction} 
also holds for fat triangles---we can follow the proof verbatim, only
replacing occurrences of ``disk'' by ``fat triangle.''

It remains to prove the equivalent of Lemma~\ref{lem:flower-intersections}
for fat triangles. To keep the terminology similar to the case of disks, we
will refer to the edges of a spiky flower $F$ as \DEF{boundary segments},
and we denote the set of boundary segments of a flower~$F$ by~$\B(F)$.
Furthermore, we let $\F := \bigcup_{\Gamma\in \Xi} \F(\Gamma)$ denote the set of spiky flowers
created by our algorithm, and we define $\B(\F) := \bigcup_{F\in\F} \B(F)$.
\begin{lemma} \label{lem:spiky-flower-intersections}
Let $\F$ be a set of spiky flowers that consist of $n$ fat triangles in total.
Then the number of intersections between the boundary segments in $\B(\F)$ is $O(n\cdot \ply(\F))$.
\end{lemma}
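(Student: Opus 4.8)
The plan is to mimic the proof of Lemma~\ref{lem:flower-intersections}: associate to each boundary segment a fat convex region, argue that the regions created for a single spiky flower are essentially non-overlapping (so the ply of the region set is $O(\ply(\F))$), and then invoke the standard "fat objects of bounded ply have few pairwise intersections" argument. The difference from the disk case is that a boundary segment of a spiky flower is already a straight segment, not a circular arc, so we no longer need the trick of cutting arcs and pushing apices. Instead, for a boundary segment $\beta\in\B(F)$ contributed by a fat triangle $\Delta$ with stabbing point $p$ of the clique, I would take the region $z$ to be the convex hull of $\beta$ together with the point~$p$; this is a triangle with one vertex at~$p$ and opposite side~$\beta$. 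The key point is that, since the clique~$C$ is stabbed by~$p$, the point~$p$ lies in every triangle of~$C$, and in particular $z=\operatorname{conv}(\beta\cup\{p\})\subseteq \Delta \subseteq F(C)$; moreover every boundary segment of~$F$ is a sub-segment of an edge of some~$\Delta\in C$, so $\beta$ is part of~$\bd F$.

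First I would establish that these regions $z$, taken over all $\beta\in\B(F)$ for a fixed flower~$F$, are pairwise non-overlapping, by the same triangle-inequality-free but directly geometric argument: two such triangles $z,z'$ share the apex~$p$, and if their interiors met then the segment joining~$p$ to an endpoint of~$\beta$ would cross~$\beta'$ (or vice versa), placing an endpoint of one boundary segment strictly inside the other triangle $\Delta'$ (resp. $\Delta$), contradicting that both endpoints lie on~$\bd F$. Hence $\ply(Z_F)=O(1)$ within one flower, so $\ply(Z(\F))=O(\ply(\F))$ where $Z(\F):=\bigcup_{F\in\F}Z_F$, and $|Z(\F)|=\sum_F |\B(F)| = O(n)$ since each triangle contributes $O(1)$ edges to its flower and each triangle lies in exactly one clique.

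Second, I would need fatness of the regions~$z$. A boundary segment~$\beta$ lies on an edge~$e$ of an $\alpha$-fat triangle~$\Delta$, and~$p$ is a point of~$\Delta$; the triangle $z=\operatorname{conv}(\beta\cup\{p\})$ is contained in~$\Delta$ but need not inherit~$\Delta$'s fatness, because~$\beta$ may be a tiny sub-segment of~$e$ while~$p$ is far from~$e$, making~$z$ a long thin sliver. This is the main obstacle. The fix is to replace~$z$ by a fat region as in the disk proof: slide the apex from~$p$ towards the midpoint~$m$ of~$\beta$ along the segment~$pm$ until the angle at the apex subtended by~$\beta$ is slightly more than, say, $\pi/3$ (or until we reach~$p$, if that already gives a large angle). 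Since the original~$z$ lies inside~$\Delta\subseteq F(C)$ and the new region is a sub-region of~$z$ containing~$\beta$ on its boundary, all the earlier properties—containment in~$F$, $\beta\subseteq\bd z$, pairwise non-overlap within a flower (the argument only used that $\beta,\beta'\subseteq\bd F$ and that apices lie on the relevant segments)—are preserved, and now each region is convex and $\Omega(1)$-fat with a constant depending only on our angle threshold. Note that distinct boundary segments of the same triangle still yield regions whose apices after shifting lie on disjoint segments, so they remain interior-disjoint except possibly at shared endpoints of consecutive boundary segments, keeping $\ply(Z(\F))\leq 2\ply(\F)=O(\log n)$.

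Finally, with $Z(\F)$ a set of $O(n)$ fat convex regions of ply $O(\ply(\F))$, the standard charging argument (charge each intersecting pair to the smaller region; a fixed region~$z$ with smallest enclosing disk~$b$ can be crossed by only $O(\ply)$ regions that are at least as large as~$z$, since by fatness each such region covers a constant fraction of the disk of radius $2\cdot\radius(b)$ concentric with~$b$) gives that the number of intersections between the regions in~$Z(\F)$ is $O(|Z(\F)|\cdot\ply(Z(\F))) = O(n\cdot\ply(\F))$. Since every boundary segment~$\beta\in\B(\F)$ lies on the boundary of its region~$z\in Z(\F)$, the number of intersections among the boundary segments in~$\B(\F)$ is at most the number of intersections among the regions in~$Z(\F)$, which is $O(n\cdot\ply(\F))$, as claimed. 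This completes the plan; the only genuinely new ingredient relative to Lemma~\ref{lem:flower-intersections} is the apex-shifting step needed to repair fatness, which I expect to be the main technical point to get right.
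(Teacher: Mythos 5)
Your construction is genuinely different from the paper's, but it has a gap in the fatness step. The paper does not use the stabbing point $p$ at all: for each boundary segment $\beta$ it builds an \emph{isosceles} triangle $z\subset F$ with base $\beta$ and base angles $\alpha/3$ (where $\alpha$ is the fatness constant of the input triangles). Such a $z$ is fat by construction, and the work in the paper's proof goes into showing that these sectors are pairwise non-overlapping, which requires a disk argument exploiting that a base angle of $2\alpha/3$ would still keep the sector inside the contributing fat triangle. Your route is the opposite: you take the pie slices $\operatorname{conv}(\beta\cup\{p\})$, for which non-overlap is immediate (they literally partition $F$, since $F$ is star-shaped with respect to $p$), and then you try to repair fatness by sliding the apex along $pm$. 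That trade is appealing, and your non-overlap reasoning (including the fact that shrinking the apex towards $m$ only shrinks each slice) is sound.

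The problem is that the sliding step does not actually yield fat regions. Your stopping rule controls only the apex angle. Unlike the disk case, where the apex is the disk center and $cm$ is the perpendicular bisector of the chord (so the shifted sector stays isosceles), here $p$ is an arbitrary stabbing point and $pm$ need not be the perpendicular bisector of $\beta$. The resulting triangle can have a bounded apex angle while one base angle tends to $0$: for instance, if $p$ lies very close to the endpoint $a$ of $\beta$ and subtends an angle $\geq\pi/3$, you never shift, and $\operatorname{conv}(\beta\cup\{p\})$ is a sliver whose angle at $b$ is nearly $0$; if you do shift and the new apex $p'$ on $pm$ lands near $a$ or $b$, the same degeneracy occurs. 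For such a thin triangle the smallest enclosing disk has radius $\Theta(|\beta|)$ while the area is $o(|\beta|^2)$, so the ``each at-least-as-large region covers a constant fraction of the doubled enclosing disk'' charging argument breaks down. To fix this you would need to control all three angles, which is essentially what the paper's isosceles construction buys you directly --- at the price of re-proving non-overlap, which is where the paper's real work lies.
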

\begin{proof}
We apply the same proof technique as in the proof of Lemma~\ref{lem:flower-intersections}:
we cover the boundary segments of each spiky flower~$F$ by a set $Z_F$ of 
non-overlapping fat ``sectors'' contained in the flower, as explained below,
from which the lemma follows.

The number of boundary segments of a spiky flower 
is~$O(n_F)$, where $n_F$ is the number of triangles defining~$F$~\cite[Section~3.2]{AgarwalKS95}.  
We create the set $Z_F$ for a spiky flower~$F$ as follows.
For each boundary segment~$\beta$ of~$F$, we create an isosceles triangle $z\subset F$ 
whose angles at the endpoints of~$\beta$ are~$\alpha/3$, where $\alpha$ is the fatness
constant of the triangles; see Figure~\ref{fig:triangle-sectors}. 
\begin{figure}
\begin{center}
\includegraphics{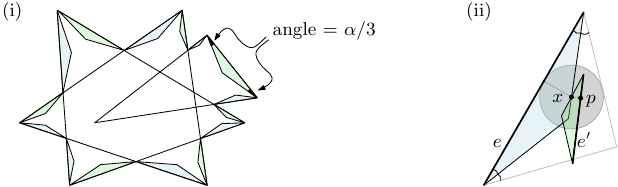}
\end{center}
\caption{(i) The set $Z_F$ of sectors defined by the boundary segments of a
spiky flower~$F$. 
(ii)~The point~$p$ lies in $D$, which in turn lies inside the expanded sector.}
\label{fig:triangle-sectors}
\end{figure}
With a slight abuse
of terminology, we refer to these isosceles triangles as sectors.
\begin{claiminproof}
The sectors in the set $Z_F$ are pairwise non-overlapping.
\end{claiminproof}
\begin{proofinproof}
Each sector~$z$ has one edge that corresponds to a boundary segment of~$F$; 
we call this the \DEF{external edge} of~$z$. The two other edges, which
make an angle $\alpha/3$ with the external edge, are called \DEF{internal edges}.

Now assume for a contradiction that two sectors $z,z'$ overlap.
Since the external edge of any sector cannot properly intersect an (external or internal)
edge of any other sector, there  must
be a crossing between an internal edge of $z$ and an internal edge of~$z'$.
Let $x$ be this crossing, and let $e$ and $e'$ be the external edges of~$z$ and $z'$, 
respectively; see Figure~\ref{fig:triangle-sectors}(ii). Assume without loss
of generality that the distance from $x$ to~$e$ is at least the distance
from $x$ to~$e'$. Let $p$ be the point on $e'$ nearest to~$x$, and
let~$D$ be the disk centered at~$x$ and touching~$e$. Then $p$ is contained in~$D$.
Moreover, $D$ is contained in the triangle~$\Delta\in \D$ contributing~$e$
to the boundary of the spiky flower~$F$. Indeed, if we were to expand the sector~$z$
by making the angles at the endpoints of~$e$ equal to $2\alpha/3$ then $D\subset z$,
and this expanded sector is contained in~$\Delta$.
Thus, $p\in \Delta$, which contradicts that $p$ lies on an external edge.
\end{proofinproof}
As in the proof of Lemma~\ref{lem:flower-intersections}, it now follows that
the number of intersections between the sectors in~$Z(\F)$ is $O(|Z(\F)|\cdot \ply(Z(\F)))$,
which proves the lemma.
\end{proof}
We can now compute the clique-based contraction with a deterministic algorithm that is
essentially the same as the randomized algorithm we used for disks: First, we compute  
each spiky flower $F$ of $\F$ using a simple divide-and-conquer approach;
because the combinatorial complexity of a spiky flower is linear, the whole
construction takes $O(n\log n)$ time. Next, we compute the intersection
between the boundary segments of $\B(\F)$ in $O(n\log n)$ time, using
the algorithm by Chazelle and Edelsbrunner~\cite{DBLP:journals/jacm/ChazelleE92}.
This algorithm can actually construct the arrangement~$\A(\F)$ as well---we
do not need Mulmuley's randomized algorithm~\cite{DBLP:journals/jacm/Mulmuley91}---so
that we can traverse the dual graph of $\A(\F)$ to 
find the pairs $(F,F')\in\F\times\F$ with $F\subset F'$.
We obtain the following result.
\begin{proposition} \label{prop:clique-cover-graph-triangles}
Let $\D$ be a set of $n$ fat triangles in the plane. Then we can compute a 
clique-based contraction
for $\ig[\D]$ in $O(n\log n)$ time.
\end{proposition}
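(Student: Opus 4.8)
The plan is to mirror the randomized algorithm behind Proposition~\ref{prop:clique-cover-graph-expected} for disks, and to observe that in the fat-triangle setting every step can be made deterministic because the boundary of a spiky flower is made of straight line segments rather than circular arcs. First I would assemble the fat-triangle analogues of the earlier ingredients: Lemma~\ref{lem:shifted-quadtree} (which holds for arbitrary objects), the clique-construction procedure (whose correctness relied only on the $O(1)$-stabbing property of fat objects whose size is $\Omega(\size(R_\nu))$), and the fat-triangle version of Lemma~\ref{lem:clique-construction}. Together these produce, in $O(n\log n)$ time, a partition of $\D$ into a set $\C=\bigcup_{\Gamma\in\Xi}\C(\Gamma)$ of stabbed cliques whose corresponding set $\F$ of spiky flowers satisfies $\ply(\F)=O(\log n)$.

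Next I would compute each spiky flower $F(C)$ explicitly. Since all triangles of a clique $C$ share a common stabbing point, their boundary segments appear in angular order around that point, so the two spiky flowers produced by recursive calls on sub-cliques can be merged in linear time; this gives the whole flower set $\F$ in $O(n\log n)$ time, and $|\B(\F)|=O(n)$ because a single spiky flower has linear complexity~\cite[Section~3.2]{AgarwalKS95}. I would then detect all pairs of intersecting spiky flowers, splitting into the boundary-crossing case and the containment case. For boundary crossings I would run the deterministic Chazelle--Edelsbrunner line-segment-intersection algorithm~\cite{DBLP:journals/jacm/ChazelleE92} on $\B(\F)$; by Lemmas~\ref{lem:clique-construction} and~\ref{lem:spiky-flower-intersections} the number of crossings is $O(n\log n)$, so this runs in $O(n\log n)$ time and, crucially, also yields the arrangement $\A(\F)$. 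Each crossing between $\beta\in\B(F)$ and $\beta'\in\B(F')$ contributes an edge $(C,C')$ to $E_\cH$. For the containments I would traverse the dual graph of $\A(\F)$ while maintaining a list $\cL$ of flowers containing the current face, reporting, whenever we first enter a flower $F$, the pair $(F,F')$ for every $F'$ currently in $\cL$; since $\ply(\F)=O(\log n)$ the list stays of size $O(\log n)$, so this reports $O(n\log n)$ pairs in $O(n\log n)$ time. Finally I would discard duplicate edges from the resulting multigraph in linear time~\cite[Exercise~22.1-4]{CLRS09}.

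The only place where the argument differs in substance from the disk case is exactly the place that makes it \emph{easier}: $\A(\F)$ is an arrangement of straight segments, so the deterministic Chazelle--Edelsbrunner algorithm applies directly and constructs the arrangement, and we can therefore avoid both Mulmuley's randomized sweep~\cite{DBLP:journals/jacm/Mulmuley91} and the lifting-map machinery of Lemma~\ref{lem:disks+points} that the circular arcs in the disk setting forced upon us. Accordingly I do not expect a genuine obstacle here; the one point that still needs a brief check is that the divide-and-conquer construction of each spiky flower really runs in $O(|C|\log|C|)$ time (linear-time merge) and that the per-flower complexity is indeed linear, both of which are standard facts about unions of fat triangles.
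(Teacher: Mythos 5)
Your proposal is correct and follows essentially the same route as the paper: compute the spiky flowers by divide-and-conquer, run Chazelle--Edelsbrunner on the boundary segments (exploiting the $O(n\log n)$ crossing bound from Lemmas~\ref{lem:clique-construction} and~\ref{lem:spiky-flower-intersections}) to obtain both the crossing pairs and the arrangement $\A(\F)$, and then sweep the dual graph of $\A(\F)$ for containments, noting that the segment case lets you avoid Mulmuley's randomized algorithm and the lifting-map machinery needed for disks. The only difference is that you spell out a couple of details the paper leaves implicit (the angular-order merge and the duplicate-edge cleanup), which is fine.
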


\subparagraph{Bichromatic Intersection Testing for fat triangles.}
The final ingredient that we need is an efficient algorithm for \bit for 
a set $B$ of $n_B$ blue triangles and a set $R$ of $n_R$ red triangles, where
all triangles in $B\cup R$ are fat. We present an algorithm for
this problem that runs in $O(n \log^2 n)$ time.
\medskip

Let $A:=\{i\cdot (\alpha/2): 0\leq i \leq \lfloor 4\pi/\alpha \rfloor\}$ be a set 
of $O(1/\alpha)=O(1)$ \DEF{canonical directions}, where $\alpha$ is the fatness constant
of the triangles. Define a \DEF{canonical segment}
to be a segment whose direction is canonical, and define a \DEF{canonical chord} of
a fat triangle~$\Delta$ to be a canonical segment that connects a vertex of $\Delta$
to its opposite side; see Figure~\ref{fig:canonical}(i). 
\begin{figure}
\begin{center}
\includegraphics{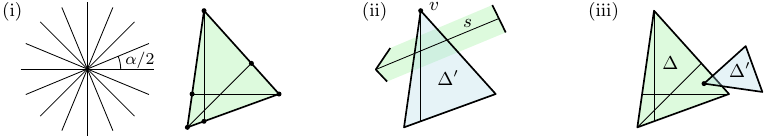}
\end{center}
\caption{(i) The set $A$ of canonical directions and the canonical chords of a triangle.
(ii)~Chord~$s$ intersects two sides of~$\Delta'$ so it intersects a canonical chord of~$\Delta'$.
(iii)~All canonical chords of~$\Delta$ miss $\Delta'$, so $\Delta'$ must have a vertex inside~$\Delta$.}
\label{fig:canonical}
\end{figure}
Note that each vertex of~$\Delta$ admits a canonical chord.
Let $S(\Delta)$ be a set of three canonical chords of $\Delta$,
one per vertex of $\Delta$,
and let $P(\Delta)$ be the endpoints of these chords
(including the vertices of~$\Delta$).
\begin{observation}\label{obs:canonical}
If a fat triangle $\Delta$ intersects a fat triangle $\Delta'$ then at least one
of the following conditions holds:
\begin{enumerate}[(i)]
\item $\Delta$ contains a vertex of $\Delta'$;
\item a point from $P(\Delta)$ lies inside~$\Delta'$;
\item a canonical chord from $S(\Delta)$ intersects a canonical chord from~$S(\Delta')$.
\end{enumerate}
\end{observation}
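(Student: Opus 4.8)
The plan is to assume that $\Delta$ and $\Delta'$ intersect and that conditions~(i) and~(ii) both fail, and to deduce~(iii) from this. I would first dispose of degeneracies by working in general position (no edge of $\Delta$ or $\Delta'$ lies in a canonical direction, no edge of one triangle is collinear with an edge of the other, and no vertex of one triangle lies on an edge or a vertex of the other); the finitely many excluded configurations can be removed by an infinitesimal perturbation, and some of them in any case fall under condition~(i). Note that, since $P(\Delta)$ contains the vertices of $\Delta$, the failure of~(ii) says in particular that no vertex of $\Delta$ and no endpoint of a canonical chord of $\Delta$ lies in $\Delta'$, while the failure of~(i) says that no vertex of $\Delta'$ lies in $\Delta$.

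The heart of the argument is the claim that, under these hypotheses, \emph{some canonical chord $s\in S(\Delta)$ meets $\Delta'$}. Granting the claim, (iii) follows quickly: both endpoints of $s$ lie outside $\Delta'$ (failure of~(ii)), so $s$ enters and leaves the convex region $\Delta'$, crossing $\bd\Delta'$ at two points lying on two edges of $\Delta'$, and these two edges share a vertex $v'$ of $\Delta'$. (A crossing at $v'$ itself would place a vertex of $\Delta'$ on $s\subset\Delta$, contradicting the failure of~(i); a crossing-free tangency is excluded by general position.) The part of $s$ inside $\Delta'$ is then a chord of $\Delta'$ separating $v'$ from the opposite side, so the canonical chord of $S(\Delta')$ emanating from $v'$ must cross it. Hence $s$ meets a canonical chord of $\Delta'$, which is exactly~(iii).

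To prove the claim, I would argue by contradiction: suppose no canonical chord of $\Delta$ meets $\Delta'$. Since each canonical chord of $\Delta$ lies inside $\Delta$, this rules out $\Delta\subseteq\Delta'$, while $\Delta'\subseteq\Delta$ would make~(i) hold; so neither triangle contains the other. As they intersect, $\bd\Delta$ and $\bd\Delta'$ must cross, and therefore $\bd\Delta'\cap\Delta$ contains a maximal arc $\gamma$ both of whose endpoints lie on $\bd\Delta$ and whose relative interior lies in the interior of $\Delta$. Using the failures of~(i) and~(ii) together with general position, $\gamma$ contains no vertex of $\Delta'$ and neither endpoint of $\gamma$ is a vertex of $\Delta$; hence $\gamma$ is a single straight segment whose two endpoints lie strictly inside two distinct edges of $\Delta$, which share a vertex $v_k$. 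Then $\gamma$ separates $v_k$ from the opposite edge of $\Delta$, so the canonical chord of $S(\Delta)$ from $v_k$ crosses $\gamma\subseteq\Delta'$ — contradicting the supposition. This establishes the claim, and with it the observation.

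The step I expect to be the main obstacle is the topological one in the last paragraph: extracting a straight chord $\gamma\subseteq\Delta\cap\Delta'$ that avoids every canonical chord of $\Delta$, and verifying that every way the two triangle boundaries can meet — including the various degenerate incidences — is accounted for (each such incidence should either collapse into condition~(i) or be removed by the general-position assumption). Once $\gamma$ is available, the only geometric fact used, both here and in the derivation of~(iii) from the claim, is the elementary observation that a chord of a triangle joining two of its sides is necessarily crossed by the cevian from the common vertex of those sides to the third side.
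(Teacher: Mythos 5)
Your proof is correct and follows essentially the same approach as the paper: the key geometric step in both is the observation that a chord joining two sides of a triangle must cross the cevian from their common vertex to the third side, applied once to show that a canonical chord of $\Delta$ entering $\Delta'$ must hit a canonical chord of $\Delta'$. You are in fact somewhat more careful than the paper, which merely asserts (with a picture) that if every canonical chord of $\Delta$ misses $\Delta'$ then $\Delta'$ has a vertex inside $\Delta$; your maximal-arc argument for the Claim supplies the missing justification by a second application of the same cevian-crossing idea.
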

\begin{proof}
Suppose that condition~(ii) does not hold. Then there is a canonical chord~$s\in S(\Delta)$
that intersects two edges of~$\Delta'$, or all canonical chords in $S(\Delta)$ are disjoint 
from~$\Delta'$. In the former case, $s$~separates a vertex~$v$ of~$\Delta'$ from its
opposite side, which implies that $s$ intersects the canonical chord of~$v$; see Figure~\ref{fig:canonical}(ii).
Thus, condition~(iii) holds in this case. In the latter case, $\Delta'$ must have a vertex
inside~$\Delta$ and condition~(i) holds; see Figure~\ref{fig:canonical}(iii).
\end{proof}
The next lemmas provide the data structures we use to handle the
three cases in Observation~\ref{obs:canonical}. For condition~(i) we need a structure
for \emph{emptiness queries} with a fat query triangle~$\Delta$ in a set $P$ points in
the plane: decide if $P\cap \Delta\neq \emptyset$ and, if so, report a witness point $p\in P\cap \Delta$.
\begin{lemma} \label{lem:emptiness-query}
There exists a data structure for emptiness queries with fat query triangles on a set~$P$ 
of $n$ points in the plane that has $O(\log^2 n)$ 
query time, uses $O(n\log^2 n)$ storage, and can be built in $O(n\log^2 n)$ time. 
\end{lemma}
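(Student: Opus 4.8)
The plan is to build a multi-level partition tree over~$P$ in which the canonical directions let us reduce a fat-triangle query to a bounded number of simpler queries. First I would observe that a fat triangle~$\Delta$ can be written as the intersection of three halfplanes, but more importantly that, because its minimum angle is at least~$\alpha$, every edge of~$\Delta$ makes a bounded angle with at least one canonical direction; equivalently, $\Delta$ is sandwiched between a canonically-oriented triangle (all three edges canonical) that contains it and one that it contains, each differing from~$\Delta$ by only a constant factor in ``width''. The cleanest route, though, is to slice~$\Delta$ by its canonical chords: using the three chords in $S(\Delta)$ we can cut~$\Delta$ into $O(1)$ pieces each of which is a triangle with \emph{two} canonically-oriented edges. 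A triangle with two fixed-direction edges is, after an affine map depending only on the two directions, an axis-parallel right triangle, i.e.\ the intersection of a halfplane with a quadrant. So it suffices to answer emptiness queries for regions of the form ``quadrant $\cap$ halfplane with a canonically-constrained normal'' and then take the OR over the $O(1)$ pieces and over the $O(1)$ choices of direction pair.

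Next I would realise such a region with a three-level tree: the first two levels are a standard range tree on the two coordinate axes of the affine-transformed space, selecting the points inside the axis-parallel corner as $O(\log^2 n)$ canonical subsets; the third level, built on each such canonical subset, must support emptiness against a halfplane whose bounding line has one of $O(1)$ fixed slopes. For a fixed slope, points below (or above) a line of that slope are exactly an interval in the sorted order by the appropriate linear functional, so the third level is just a sorted array (one per canonical slope, so $O(1)$ arrays), and a query is a binary search returning the extreme point, which we then test against the halfplane in $O(1)$ time. Summing up: each of the $O(\log^2 n)$ canonical subsets costs $O(\log n)$ for the third-level binary search, which would naively give $O(\log^3 n)$; to shave a log I would use fractional cascading down the third level across the $O(\log n)$ canonical subsets hanging off a fixed node of the second level, bringing the per-query cost back to $O(\log^2 n)$. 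Storage is $O(n\log^2 n)$ because a range tree with $O(1)$ sorted secondary structures at each third-level node has that size, and cascading does not change it asymptotically; the build time is $O(n\log^2 n)$ by the usual bottom-up construction of range trees with fractional cascading. Since we repeat this over $O(1)$ direction pairs and $O(1)$ chord pieces, all bounds are preserved.

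The main obstacle I expect is the reduction from a genuinely three-sided fat triangle to the $O(1)$ ``two-canonical-edges'' pieces: I need to argue carefully that the canonical chords of~$\Delta$ exist (which is exactly the remark that each vertex of~$\Delta$ admits a canonical chord, already stated before Observation~\ref{obs:canonical}), that cutting along them yields only triangles each of which has at least two edges lying along canonical directions — one inherited edge of~$\Delta$ may be non-canonical, but the two chord-edges are canonical, and if a piece is incident to a vertex of~$\Delta$ one of its edges is an original non-canonical edge while the other two are canonical chords, so in every case at least two of the three edges are canonical — and that the whole dissection uses only $O(1)$ pieces. Once the dissection is pinned down, the data-structuring is routine multi-level partition-tree bookkeeping; the only place where a little care is needed is making the fractional cascading go through across the right family of canonical subsets so that the final query time is $O(\log^2 n)$ rather than $O(\log^3 n)$.
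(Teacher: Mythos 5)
Your high-level plan matches the paper's: partition the fat query triangle into $O(1)$ semi-canonical pieces (two canonical edges each), handle the two canonical constraints with a two-level range tree, put a half-plane emptiness structure at the third level, and recover the extra $\log$ factor with fractional cascading. The bounds you claim are correct. However, there is a genuine gap in your third level. You assert that the residual half-plane ``has one of $O(1)$ fixed slopes'' and propose a sorted array keyed by a fixed linear functional. But the bounding line of that half-plane is a piece of the one \emph{non-canonical} side of the semi-canonical triangle, and its direction is not drawn from a finite set---fatness merely bounds it away from the two canonical chord directions, leaving a continuum of possible slopes. A single sorted array under a fixed functional cannot answer such a query. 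The paper instead stores the lower hull of each canonical subset, ordered by tangent direction; a query binary-searches for the hull vertex whose supporting line is parallel to $\bd h_\Delta$ and tests only that vertex. Because every third-level search in one query is for the same direction, fractional cascading still applies and yields $O(\log^2 n)$ query time with the stated storage and preprocessing.

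A smaller slip concerns the dissection you yourself flag as delicate: cutting $\Delta$ along all three chords of $S(\Delta)$ simultaneously does not in general produce only triangles---when the chords are not concurrent you obtain a central triangle together with a mix of triangles and quadrilaterals around it. The paper avoids this by cutting sequentially (one chord, then a chord in each of the two resulting sub-triangles), which yields exactly four semi-canonical triangles.
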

\begin{proof}
We call a triangle \emph{semi-canonical} if it has two \emph{canonical edges},
that is, two edges with a canonical direction. 
Any fat triangle~$\Delta$ can be partitioned into four semi-canonical triangles: first partition
$\Delta$ by a canonical chord from one of its vertices, and then partition the
resulting sub-triangles using canonical chords from the other two vertices. Hence, it suffices
to construct a data structure for semi-canonical query triangles. 
There are $O(1/\alpha^2)=O(1)$ different classes of semi-canonical triangles,
depending on the directions used by its canonical edges.
We construct a separate structure for each class, as described next.

Assume wlog that each query triangle~$\Delta$ in the class under consideration is of the form 
$[x_\Delta,\infty)\cap [y_\Delta,\infty) \cap h_\Delta$, where $h_{\Delta}$
is a negative half-plane (that is, a half-plane lying below its bounding line).
We can answer emptiness queries with such query triangles using a three-level
data structure: the first level is a search tree on the $x$-coordinates
of the points, the second level a search tree on the $y$-coordinates, and the
third level a data structure for half-plane emptiness queries.

Emptiness query with a negative half-plane $h_\Delta$ on a set $P'$ of points
can be answered by binary search on the
lower hull $\lh(P')$ of~$P'$. Indeed, if $p'\in P'$ is a vertex of $\lh(P')$ that
admits a tangent line parallel to~$\bd h_\Delta$, the bounding line of~$h_\Delta$,
then $P'$ contains a point below~$\bd h_\Delta$ iff $p'$ lies below~$\bd h_\Delta$. 
See Figure~\ref{fig:DS-fat-1}(i).
Thus our third-level structures are sorted arrays of the relevant lower hulls.
Since $\lh(P')$ can be computed 
in $O(|P'|)$ time if $P'$ is presorted on $x$-coordinate,
our three-level data structures uses $O(n\log^2 n)$
storage and can be built in $O(n\log^2 n)$ time~\cite{DBLP:journals/jacm/WillardL85}.

\begin{figure}
\begin{center}
\includegraphics{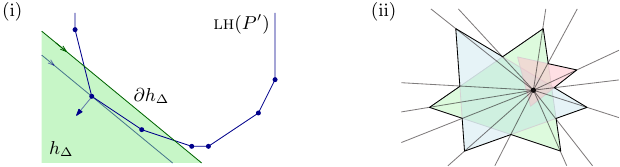}
\end{center}
\caption{(i)~The lower half-plane~$h_\Delta$ contains a point of~$P'$ if and only if
it contains the extreme vertex of~$\lh(P')$ in the direction orthogonal to~$\bd h_\Delta$.
(ii)~Partition of the plane into cones, induced by a spiky flower~$F$ defined by three triangles.}
\label{fig:DS-fat-1}
\end{figure}

A query in this three-level data structure takes $O(\log^3 n)$ time. 
Since the search in the third-level structures is just a binary search in a sorted list
and these searches are all with the same value---the direction of~$\bd h_\Delta$---we
can speed this up to $O(\log^2 n)$ using fractional 
cascading~\cite{DBLP:journals/algorithmica/ChazelleG86}, without increasing
the (asymptotic) preprocessing time.
\end{proof}
\emph{Note:} The data structure presented above is almost the same as the one by 
Gray~\cite[page~14]{gray-thesis} for reporting queries with fat query triangles.
The difference lies in the third level, where we can be more efficient because
we only want to answer emptiness queries.
\medskip

To handle condition~(ii) from Observation~\ref{obs:canonical} we need a data structure
for \emph{stabbing queries} on a set~$R$ of $n$ fat triangles: given a query point~$q$,
decide whether there is a triangle $~\Delta\in R$ such that~$q\in\Delta$ and, if so, report a witness triangle.
\begin{lemma}\label{lem:stabbing-query}
There exists a data structure for stabbing queries on a set~$R$ of $n$ fat triangles 
that has $O(\log^2 n)$ query time, uses $O(n)$ storage,
and can be built in $O(n\log n)$ time.
\end{lemma}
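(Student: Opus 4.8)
The plan is to build a standard multi-level partition-tree-style data structure that, for a query point $q$, tests whether $q$ lies in some fat triangle of~$R$. The key idea is to reduce the problem to \emph{cone stabbing}: a fat triangle $\Delta$ contains $q$ iff $q$ lies in all three half-planes bounded by the supporting lines of $\Delta$'s edges; grouping the triangles by the canonical directions closest to their edge directions, we may assume all triangles in a group have edges whose directions fall into $O(1)$ buckets, so containment becomes ``$q$ lies in the intersection of three half-planes whose normals are (roughly) fixed.'' First I would partition $R$ into $O(1/\alpha^3)=O(1)$ classes according to which canonical directions the three edges of each triangle are closest to; it suffices to handle one class. Within a class, each triangle is the intersection of a ``left'' half-plane, a ``right'' half-plane, and a ``bottom'' half-plane, with normals confined to small fixed ranges.

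For a single class I would use a three-level tree. Since the three half-planes have near-fixed normal directions, after an affine transformation we can treat the first two constraints as $q$ lying to the right of a nearly-vertical line and above a nearly-horizontal line --- this is essentially dominance --- handled by a two-level balanced search tree (on, say, the $x$-extent and then the $y$-extent of the appropriate triangle vertex), giving $O(\log^2 n)$ canonical subsets whose union is exactly the triangles whose first two half-planes contain~$q$. At the third level, for each canonical subset I store the \emph{upper envelope} of the bounding lines of the third (bottom) half-planes --- equivalently their lower hull in the dual --- so that a stabbing test reduces to: does $q$ lie below the upper envelope of these lines? That is a single binary search (a point-location on a sorted list of envelope breakpoints, or a search with the fixed query direction), costing $O(\log n)$, and it immediately yields a witness line, hence a witness triangle, when $q$ is inside. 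Summing, the query time is $O(\log^2 n \cdot \log n)=O(\log^3 n)$ naively, but since the third-level searches are all point-locations along the same structure type one can apply fractional cascading across the third-level envelopes hanging off a fixed second-level path --- exactly as in Lemma~\ref{lem:emptiness-query} --- to bring the query down to $O(\log^2 n)$.

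For the resource bounds: the upper envelope of $m$ lines has complexity $O(m)$ and can be computed in $O(m)$ time if the lines are presorted by slope, which we can arrange once globally; each triangle's bottom line appears in $O(\log^2 n)$ canonical subsets of the two-level tree, so the total size of all third-level structures is $O(n\log^2 n)$ --- wait, that would overshoot the claimed $O(n)$ storage. To get $O(n)$ storage I would instead use the well-known trick of making the first two levels \emph{not} a range tree but rather structures that give only $O(1)$ canonical nodes on a root-to-leaf path whose subsets partition (not cover with multiplicity) the relevant set --- concretely, since the first two constraints are ``dominance'' constraints with fixed directions, one can sort the triangles in that class and realize the stabbing test as a small number of interval searches so that each triangle sits in $O(1)$ third-level envelopes, giving $O(n)$ total storage and $O(n\log n)$ build time (the $\log n$ from sorting / envelope construction). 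The main obstacle I anticipate is precisely this tension between query time and the $O(n)$ storage bound: a generic three-level range tree would use $O(n\log^2 n)$ space, so the real work is arguing that the specific geometry here (three half-planes with direction-restricted normals, i.e.\ a translate of one of $O(1)$ fixed cones intersected with a half-plane) lets us replace the outer levels by constant-overhead searches, reducing the problem to $O(1)$ instances of ``is $q$ below the upper envelope of a set of lines,'' each answerable in $O(\log n)$ time after $O(n\log n)$ preprocessing, and then layering fractional cascading to shave one logarithmic factor off the query.
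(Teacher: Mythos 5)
Your approach is genuinely different from the paper's, and it has a concrete gap in the storage argument. You build a multi-level partition tree over $O(1)$ direction-classes of near-homothetic triangles, with upper envelopes at the bottom level and fractional cascading to get $O(\log^2 n)$ query time. This gives a correct structure with $O(n\log^2 n)$ storage (and build time), and you explicitly flag that the claimed $O(n)$ storage is the problem. The proposed fix --- a ``well-known trick'' that makes each triangle appear in $O(1)$ third-level envelopes because ``the first two constraints are dominance constraints with fixed directions'' --- is not worked out, and I do not believe it goes through as stated. Within a direction bucket the edge normals still range over a nontrivial interval (of width $\Theta(\alpha)$), so the first two constraints are not reducible to coordinate-wise dominance after a single affine map; you genuinely have three half-plane constraints per triangle whose directions vary continuously. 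For this general three-half-plane emptiness problem, there is no obvious linear-space, polylog-query structure, and the staircase/priority-search-tree tricks that work for true dominance do not immediately apply.

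The paper avoids this entirely by reusing the shifted-quadtree / skip-quadtree clique machinery it already has: run the clique-construction algorithm on $R$, obtaining a skip quadtree of depth $O(\log n)$ with $O(1)$ stabbed cliques assigned to each node; each triangle lands in exactly one clique, so total storage is $O(n)$. For each clique $C$ with stabbing point $p$, the union $F(C)$ is star-shaped from $p$, so subdividing the plane into cones around $p$ through the $O(|C|)$ vertices of $F(C)$ answers a stabbing query for $C$ in $O(\log|C|)$ time by binary search. A query point's search path in the skip quadtree hits $O(\log n)$ nodes, and only cliques at those nodes (and their $O(1)$ siblings) can contain $q$, giving $O(\log^2 n)$ query time. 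The build time is the $O(n\log n)$ skip-quadtree construction plus $O(|C|\log|C|)$ per flower. So the paper's structure is not a range tree at all: the logarithmic factors come from the depth of the quadtree and the binary search inside a cone fan, and fatness enters through the alignment property that bounds the number of cliques per node, not through direction-bucketing of the edges. If you want to keep your range-tree route, the honest version of your lemma has $O(n\log^2 n)$ storage and $O(n\log^2 n)$ preprocessing --- which would still suffice for the $O(n\log^2 n)$ SSSP bound --- but it does not prove the $O(n)$-storage claim as stated.
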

\begin{proof}
The data structure is based on our algorithm to construct a clique-based contraction, which
we run on the set~$R$. Thus, we consider the collection~$\Xi$ 
of three hierarchical grids from Lemma~\ref{lem:shifted-quadtree}, and we partition $R$ into 
three subsets $R(\Gamma)$ according to the hierarchical grid $\Gamma\in\Xi$ they are aligned with.
We then construct a skip quadtree~$\tree$ on the reference points of the triangles in~$R(\Gamma)$,
based on the hierarchical grid~$\Gamma$. We assign each triangle $\Delta\in\R(\Gamma)$
to a node in~$\tree$ as described earlier, and we construct~$O(1)$ stabbed cliques for each node.

Each of the resulting cliques~$C$ is preprocessed for stabbing queries, as follows.
Let $p\in\bigcap_{\Delta\in C} \Delta$, and let $F(C)$ be the
spiky flower defined by~$C$. We partition the plane into cones with apex~$p$ by drawing
rays from $p$ through every vertex of~$F(C)$. 
See Figure~\ref{fig:DS-fat-1}(ii).
Given a query point~$q$, we can now
answer a stabbing query in logarithmic time by binary search. 

The data structure for each of the three hierarchical grids
uses $O(n)$ storage for the skip quadtree
plus $\sum_C O(|C|)= O(n)$ for the cone structures, so $O(n)$ in total.
Since the skip quadtree can be built in $O(n\log n)$ time and each
spiky flower $F(C)$ with its corresponding cone structure can be built
in $O(|C|\log|C|)$ time, the total preprocessing time is~$O(n\log n)$.
To answer a query with a point~$q$ we search in $\tree$, and for
each node~$\nu$ on the search path we query the $O(1)$ cone structures
associated to~$\nu$. Thus the query time is~$O(\log^2 n)$.
\end{proof}
To handle condition~(iii) we need a structure for segment-intersection queries on a set
$S$ of canonical segments in the plane: given a canonical query segment~$s$,
decide $s$ intersects a segment $s'\in S$, and, if so, report a witness segment.
Obtaining an efficient structure for this is straightforward, as shown in the following lemma.
\begin{lemma}\label{lem:segment-query}
There exists a data structure for segment-intersection queries with canonical query segmwnts
in a set $S$ of $n$ canonical segments
that has $O(\log n)$ query time, uses $O(n)$ storage, and can be built in $O(n\log n)$ time.
\end{lemma}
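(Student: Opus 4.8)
The plan is to exploit that we only have $O(1)$ canonical directions, so that segment-intersection queries among canonical segments reduce to a constant number of much simpler problems, one for each ordered pair of canonical directions. Concretely, I would first group the segments of $S$ by their direction: for each $a\in A$ let $S_a\subseteq S$ be the segments with direction~$a$. A query segment~$s$ also has a canonical direction~$a'\in A$, so it suffices to build, for every pair $(a,a')\in A\times A$, a structure that decides whether the query segment of direction~$a'$ crosses some segment of $S_a$. Since $|A|=O(1)$, we build $O(1)$ such structures and take $O(n\log n)$ total preprocessing, $O(n)$ total storage, and $O(\log n)$ query time overall, which gives the claimed bounds provided each individual $(a,a')$-structure meets them.

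For a fixed pair $(a,a')$ I would apply an affine transformation that maps direction~$a$ to the horizontal and direction~$a'$ to some fixed non-horizontal slope (if $a=a'$ the two segments are parallel and cannot properly cross, so that case is trivial after a boundary check, or one can simply declare no intersection is reported). After this transformation, $S_a$ is a set of horizontal segments and each query~$s$ is a segment of fixed slope. A query segment~$s$ of fixed slope intersects a horizontal segment~$h=[x_1,x_2]\times\{y\}$ iff the $x$-coordinate at which $s$ crosses the horizontal line $\{y=\text{const}\}$ lies in $[x_1,x_2]$; equivalently, writing $s$ as a point moving with fixed slope, this is a standard point-in-interval / stabbing condition in a one-parameter family. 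This is exactly the kind of query answered by an interval tree (or a priority search tree), built on the horizontal segments: organize the $y$-intervals spanned by the query segment and, within the relevant $y$-range, test whether the $x$-value of~$s$ stabs one of the horizontal segments. An interval tree supports stabbing queries in $O(\log n)$ time with $O(n)$ storage and $O(n\log n)$ construction, and a priority search tree gives the ``is there an interval containing a query point within a $y$-window'' primitive with the same bounds. Either way, one data structure per pair handles condition (iii) within budget.

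The main obstacle, such as it is, is bookkeeping rather than any genuine difficulty: making the reduction clean requires checking that a canonical query segment only needs to be tested against segments of each single canonical direction separately (immediate, since $S=\bigcup_{a\in A} S_a$ and the answer is the disjunction over $a$), and that after the affine normalization the resulting 1-parameter intersection test really is a vanilla interval-stabbing query. One must also be slightly careful that, since the directions in $A$ are spaced $\alpha/2$ apart and the affinely-normalized query slope depends only on the pair $(a,a')$ and not on the individual query, the normalization is a fixed linear map that can be precomputed, so it adds only $O(1)$ time per query. Finally, if endpoint-coincidences are a concern, a symbolic-perturbation or explicit tie-handling remark suffices; since we only need to \emph{detect} an intersection and report any witness, degenerate incidences can be handled by treating the intervals as closed and breaking ties arbitrarily.

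Putting these together: build the $O(1)$ interval-tree structures, one per $(a,a')\in A\times A$, each on the appropriately normalized copy of $S_a$; a query with canonical segment~$s$ of direction~$a'$ queries the $|A|=O(1)$ structures indexed by $(\cdot,a')$ and returns the first witness found. Preprocessing is $O(1)\cdot O(n\log n)=O(n\log n)$, storage is $O(1)\cdot O(n)=O(n)$, and query time is $O(1)\cdot O(\log n)=O(\log n)$, as claimed.
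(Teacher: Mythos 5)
Your proposal is correct and matches the paper's proof essentially verbatim: both partition $S$ by canonical direction into $O(1)$ classes, normalize each (query-direction, segment-direction) pair to an axis-aligned configuration, and answer with an interval tree. The only cosmetic difference is that the paper simply says ``assume wlog the query is vertical and the segments horizontal'' rather than spelling out the affine normalization.
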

\begin{proof}
We build $O(1/\alpha)=O(1)$ data structures
on~$S$, one for each of the possible directions of the query segment. 
To construct a data structure for a fixed direction of the query segment,
we partition $S$ into $O(1/\alpha)$ classes, one for each canonical direction, 
and we build a separate substructure for each class. 
Assume wlog that the query chord is vertical and that the chords in
in the given class are horizontal. Then we can use an interval tree~\cite[Section~10.1]{bcko-cgaa-08}
to answer queries in $O(\log n)$, using $O(n)$ space and $O(n\log n)$
preprocessing. 
\end{proof}
Putting everything together, we conclude that we can solve
\bit for fat triangles in $O((n_B+n_R) \log^2 (n_B+n_R))$ time:
we construct the data structure from Lemma~\ref{lem:emptiness-query}
on the vertices of the triangles in~$R$ and query with each triangle in~$B$,
we construct the data structure from Lemma~\ref{lem:stabbing-query}
on the triangles in~$R$ and query with each vertex of a triangle in~$B$,
and we construct the data structure from Lemma~\ref{lem:segment-query}
on the canonical chords of the triangles in~$R$ and query with the canonical
chords of each triangle in~$B$. This leads to the following theorem.
\begin{theorem} \label{thm:sssp-for-fat}
Let $\D$ be a set of $n$ fat triangles in the plane. Then we can compute a shortest-path tree 
in $\ig[\D]$ for a given source triangle $\Delta_{\mathrm{src}}\in \D$ in $O(n\log^2 n)$ time.
\end{theorem}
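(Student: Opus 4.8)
The plan is to invoke Theorem~\ref{thm:framework} with $\D$ being the set of $n$ fat triangles, so all that remains is to supply the two ingredients the theorem requires and to verify their running times fit into the claimed $O(n\log^2 n)$ bound. The first ingredient, a clique-based contraction of $\ig[\D]$ computable in $T_{\mathrm{ccg}}(n)=O(n\log n)$ time, is exactly Proposition~\ref{prop:clique-cover-graph-triangles}. The second ingredient is an algorithm solving \bit for two subsets $B,R\subset\D$ in $T_{\mathrm{bit}}(n_B,n_R)$ time; I will argue that the data structures of Lemmas~\ref{lem:emptiness-query}, \ref{lem:stabbing-query}, and \ref{lem:segment-query} combine to give $T_{\mathrm{bit}}(n_B,n_R)=O((n_B+n_R)\log^2(n_B+n_R))$. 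Plugging $T_{\mathrm{ccg}}(n)=O(n\log n)$ and $T_{\mathrm{bit}}(n,n)=O(n\log^2 n)$ into Theorem~\ref{thm:framework} yields a shortest-path tree in $O(n\log^2 n)$ time, which is the statement.

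First I would spell out the \bit algorithm for fat triangles in detail. Given $B$ and $R$, the goal is to report every blue triangle $\Delta\in B$ that intersects some red triangle, together with a witness. By Observation~\ref{obs:canonical}, a blue triangle $\Delta$ intersects a red triangle $\Delta'\in R$ iff (i) $\Delta$ contains a vertex of $\Delta'$, or (ii) some point of $P(\Delta)$ lies in $\Delta'$, or (iii) some canonical chord of $S(\Delta)$ crosses a canonical chord of $S(\Delta')$. Accordingly I would build three structures on $R$: the fat-triangle emptiness structure of Lemma~\ref{lem:emptiness-query} on the set of all vertices of triangles in $R$ (handling case~(i) by querying with each $\Delta\in B$); the stabbing structure of Lemma~\ref{lem:stabbing-query} on the triangles of $R$ (handling case~(ii) by querying with each point of $P(\Delta)$ for each $\Delta\in B$, of which there are $O(1)$ per blue triangle); and the canonical-segment-intersection structure of Lemma~\ref{lem:segment-query} on the set $\bigcup_{\Delta'\in R}S(\Delta')$ of red canonical chords (handling case~(iii) by querying with each chord of $S(\Delta)$, again $O(1)$ per blue triangle). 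For each blue triangle we test all three conditions; as soon as one fires we have found a witness red triangle, so we report $\Delta$ and move on. Correctness is immediate from Observation~\ref{obs:canonical}: $\Delta$ is reported iff it intersects some red triangle.

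For the running time of this \bit procedure, let $N:=n_B+n_R$. Preprocessing costs $O(N\log^2 N)$ for Lemma~\ref{lem:emptiness-query} and $O(N\log N)$ each for Lemmas~\ref{lem:stabbing-query} and~\ref{lem:segment-query}, so $O(N\log^2 N)$ in total. Each blue triangle triggers $O(1)$ queries into each structure, costing $O(\log^2 N)$, $O(\log^2 N)$, and $O(\log N)$ respectively, so all queries together cost $O(n_B\log^2 N)=O(N\log^2 N)$. Hence $T_{\mathrm{bit}}(n_B,n_R)=O((n_B+n_R)\log^2(n_B+n_R))$, which is at least linear and at most quadratic in each argument as Theorem~\ref{thm:framework} requires. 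With $T_{\mathrm{ccg}}(n)=O(n\log n)$ from Proposition~\ref{prop:clique-cover-graph-triangles} and $T_{\mathrm{bit}}(n,n)=O(n\log^2 n)$, Theorem~\ref{thm:framework} gives the overall bound $O(T_{\mathrm{ccg}}(n)+T_{\mathrm{bit}}(n,n))=O(n\log^2 n)$.

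The conceptual obstacles have already been cleared by the lemmas that precede the theorem, so the ``hard part'' here is really just bookkeeping: I must make sure the emptiness structure of Lemma~\ref{lem:emptiness-query} is applied to the $O(n_R)$ vertices of the red triangles (not the triangles themselves) and that the canonical-chord structure of Lemma~\ref{lem:segment-query} is fed the $O(n_R)$ red canonical chords, while querying with the $O(n_B)$ blue triangles, blue points $P(\Delta)$, and blue canonical chords respectively — i.e., keeping the red/blue roles straight is the only place an error could creep in. One should also double-check that the framework's requirement of reporting a \emph{witness} for each reported blue object is met, which it is because each of the three query structures returns a witness (a point of $P$, a triangle of $R$, or a segment of $S$) from which the intersecting red triangle is immediately recovered. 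No deeper difficulty remains, and the theorem follows by direct instantiation of Theorem~\ref{thm:framework}.
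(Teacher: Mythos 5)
Your proposal matches the paper's proof essentially line for line: both instantiate Theorem~\ref{thm:framework} with $T_{\mathrm{ccg}}(n)=O(n\log n)$ from Proposition~\ref{prop:clique-cover-graph-triangles} and realize \bit in $O((n_B+n_R)\log^2(n_B+n_R))$ time by combining Observation~\ref{obs:canonical} with the three data structures of Lemmas~\ref{lem:emptiness-query}, \ref{lem:stabbing-query}, and~\ref{lem:segment-query}, assigning the red/blue roles exactly as you describe. There is no substantive difference in approach or in the bookkeeping.
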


\section{Concluding remarks}
\label{sec:conclusions}
We presented the first $O(n \log n)$ algorithm for the SSSP problem in disk graphs.
We extended the algorithm to intersection graphs of fat triangles, where we obtain a running time of $O(n \log^2 n)$. 
The algorithm can be extended to the problem of computing a shortest-path tree
from a set $S$ of multiple sources, where the goal is to compute for each disk a shortest
path to the set~$S$. 

A natural question is whether the algorithm for fat triangles can be
improved further. Another natural question is whether an $O(n \polylog n)$ algorithm is 
possible for intersection graphs of non-fat objects such as line segments. 
This is unlikely, however, because Hopcroft's problem---given
a set $L$ of $n$ lines and a set $P$ of $n$ points, decide if any
of the points lies on any of the lines---can 
be reduced to the SSSP problem for segments, and Hopcroft's problem
has an $\Omega(n^{4/3})$ lower bound~\cite{DBLP:journals/dcg/Erickson96}
in a somewhat restricted model of computation.

\bibliography{references}
\end{document}